\documentclass[a4paper,UKenglish,cleveref, autoref, thm-restate]{lipics-v2021}

\pdfoutput=1 
\hideLIPIcs  


\bibliographystyle{plainurl}

\title{For the Metatheory of Type Theory, Internal Sconing Is Enough} 


\author{Rafaël Bocquet}{Eötvös Loránd University, Budapest, Hungary}{bocquet@inf.elte.hu}{https://orcid.org/0000-0001-6484-9570}
{}

\author{Ambrus Kaposi}{Eötvös Loránd University, Budapest, Hungary}{akaposi@inf.elte.hu}{http://orcid.org/0000-0001-9897-8936}
{Supported by the Bolyai Fellowship of the Hungarian Academy of Sciences and by the ``Application Domain Specific Highly Reliable IT Solutions'' project of the National Research, Development and Innovation Fund of Hungary, financed under the Thematic Excellence Programme TKP2020-NKA-06 funding scheme.}

\author{Christian Sattler}{Chalmers University of Technology, Sweden}{sattler@chalmers.se}{https://orcid.org/0000-0001-6374-4427}
{}

\authorrunning{R. Bocquet and A. Kaposi and C. Sattler} 

\Copyright{Rafaël Bocquet and Ambrus Kaposi and Christian Sattler} 

\ccsdesc[500]{Theory of computation~Logic~Type theory}

\keywords{type theory, presheaves, canonicity, normalization, sconing, gluing} 

\category{} 

\relatedversion{} 




\nolinenumbers 

\EventEditors{Marco Gaboardi and Femke van Raamsdonk}
\EventNoEds{2}
\EventLongTitle{8th International Conference on Formal Structures for Computation and Deduction (FSCD 2023)}
\EventShortTitle{FSCD 2023}
\EventAcronym{FSCD}
\EventYear{2023}
\EventDate{July 3--6, 2023}
\EventLocation{Rome, Italy}
\EventLogo{}
\SeriesVolume{260}
\ArticleNo{14}

\newtheorem{construction}[theorem]{Construction}

\usepackage[usenames,dvipsnames]{xcolor}

\usepackage{todonotes}
\presetkeys{todonotes}{inline}{}

\usepackage{tikz}
\usetikzlibrary{cd}
\usetikzlibrary{decorations.pathmorphing}

\usepackage{stmaryrd}
\usepackage{amsmath}
\usepackage{amsthm}
\usepackage{amssymb}
\usepackage{bm}
\usepackage{hyperref}
\usepackage{xspace}
\usepackage{mathpartir}
\usepackage{float}
\usepackage{multirow}

\input{macros.tex}

\newcommand{\El}{\mathsf{El}}

\newcommand{\Th}{\CT}

\newcommand{\Tele}{\mathsf{Tele}}
\newcommand{\CTele}{\mathbf{Tele}}
\newcommand{\CScone}{\mathbf{Scone}}

\newcommand{\CSect}{\mathbf{Sect}}

\newcommand{\Ne}{\mathsf{Ne}}
\newcommand{\Nf}{\mathsf{Nf}}
\renewcommand{\ne}{\mathsf{ne}}
\newcommand{\nf}{\mathsf{nf}}
\newcommand{\norm}{\mathsf{norm}}

\newcommand{\CRen}{\mathbf{Ren}}


\begin{document}

\maketitle

\begin{abstract}
Metatheorems about type theories are often proven by interpreting the syntax into models constructed using categorical gluing.
We propose to use only sconing (gluing along a global section functor) instead of general gluing. The sconing is performed internally to a presheaf category, and we recover the original glued model by externalization.
  
Our method relies on constructions involving two notions of models: first-order models (with explicit contexts) and higher-order models (without explicit contexts). Sconing turns a displayed higher-order model into a displayed first-order model.

Using these, we derive specialized induction principles for the syntax of type theory.
The input of such an induction principle is a boilerplate-free description of its motives and methods, not mentioning contexts.
The output is a section with computation rules specified in the same internal language.
We illustrate our framework by proofs of canonicity, normalization and syntactic parametricity for type theory.
\end{abstract}

\section{Introduction}\label{sec:introduction}

The syntax of a type theory can be presented as the initial object in the category of models of a generalized algebraic theory (GAT), \ie as a quotient inductive-inductive type (QIIT).
Initiality provides an induction principle for the syntax, namely the dependent eliminator of the QIIT.
Metatheoretic properties of the syntax, such as canonicity or normalization, can then be proven by carefully constructing models of the theory displayed over the syntax, or equivalently the motives and methods of the induction principle.
However, the presentation of the syntax as a QIIT includes an explicit encoding of the substitution calculus of the theory; in particular every type or term former comes with a substitution rule.
If all of the components of a complicated model are written explicitly, one has to prove that they all respect substitution.
More importantly, when working exclusively at this level of generality, it is not easy to abstract the proof methods into reusable theorems.

An alternative to the (first-order) genereralized algebraic presentation is to present a type theory as a second-order or higher-order theory, for example by using a Logical Framework~\cite{FrameworkDefiningLogics} or Uemura's representable map categories~\cite{GeneralFrameworkSemanticsTT}.
A higher-order presentation enables the use of higher-order abstract syntax (HOAS), in which binders are encoded by metatheoretic functions.
In practice, this means that stability under substitution is implicit.
Semantically, HOAS admits an interpretation in the internal language of presheaf categories~\cite{SemAnalysisHOAS}.
Within the internal language of a presheaf category, all constructions are automatically stable under the morphisms of the base category.

In this work we propose an approach that combines the strengths of the first-order and higher-order presentations.
We consider two notions of models of a type theory: first-order models correspond to the first-order presentation, while higher-order models correspond to the higher-order presentation.
Typically, first-order models are categorical models of type theory (categories with families equipped with additional structure), while higher-order models are approximately universes closed under the type formers of the theory.
Both of these notions make sense both externally and in the internal language of a presheaf category.
We also have notions of displayed higher-order and first-order models, corresponding to the motives and methods of induction principles.
We present a small number of constructions switching between external, internal, first-order and higher-order models.
Any of these constructions is individually simple, but they can be composed to derive the induction principles we need to prove metatheoretic results.
The constructions are listed below (FOM = first-order model, HOM = higher-order model). The restriction and externalization operations can also be applied to displayed first-order models and morphisms of first-order models. \\[0.5em]
{\small
  \begin{tabular}{|c|c|c|}
    \hline
    Construction
    & Input
    & Output \\
    \hline
    Internalization
    & FOM $\CM$
    & HOM $\MC$ in $\CPsh(\CM)$ \\
    \hline
    $\CSet$-contextualization
    & HOM $\MM$
    & FOM $\CSet_\MM$ \\
    \hline
    Telescopic contextualization
    & HOM $\MM$
    & FOM $\CTele_\MM$ \\
    \hline
    \multirow{2}{*}{Restriction}
    & Functor $F : \CC \to \CD$,
    & \multirow{2}{*}{FOM $F^\ast(\CM)$ in $\CPsh(\CC)$} \\
    & FOM $\CM$ in $\CPsh(\CD)$ & \\
    \hline
    Externalization
    & FOM $\CM$ in $\CPsh(\CD)$
    & External FOM $1_\CD^\ast(\CM)$ \\
    \hline
    \multirow{2}{*}{$\CScone$-contextualization}
    & FOM $\CM$,
    & \multirow{2}{*}{Displayed FOM $\CScone_{\MM^\bullet}$ over $\CM$} \\
    & Displayed HOM $\MM^\bullet$ over $\CM$ & \\
    \hline
  \end{tabular}
}

The most notable operations are the contextualizations%
\footnote{The word contextualization reflects the fact that these operations make contexts explicit. It is not related to the notion of contextual model.}%
, which turn higher-order models into first-order models.
The $\CSet$-contextualization generalizes the construction of a first-order model from a universe; its underlying category is always the category of sets.
The telescopic contextualization is the contextual core of the $\CSet$-contextualization; it restricts the underlying category to a category of telescopes.
The $\CScone$-contextualization is a generalization of the $\CSet$-contextualization to displayed higher-order and first-order models, which correspond to the motives and methods of induction principles; its underlying category is always the Sierpinski cone, also called scone, of some first-order model $\CM$.

Our main observation is that sconing (that is $\CScone$-contextualization) internally to a presheaf category corresponds when viewed externally to a more complicated gluing construction.
For example, the normalization model from~\cite{CoquandNormalization} can be recovered as the externalization of the $\CScone$-contextualization of a displayed higher-order model constructed in presheaves over the category of renamings.

The main intended application of these constructions is the statement and proofs of relative induction principles.
It is typically the case that the result of induction over the syntax of type theory only holds over some of the syntactic contexts, or is only stable under some of the syntactic substitutions.
For example, canonicity only holds over the empty context.
Normalization holds over every context, but is only stable under renamings.
This situation is described by a functor into the syntax $F : \CR \to \CS$: the result of the induction should be stable under the morphisms of $\CR$.
These functors are called ``figure shapes'' by Sterling~\cite{SterlingThesis}, they are also related to worlds in Twelf~\cite{Twelf}.
We prove the relative induction principles for the following functors. 
\begin{alignat*}{4}
  & 1_\CCat && { }\to{ } && \CS_{\Th} && \quad\text{Canonicity \cite{CoquandNormalization} (Theorem~\ref{thm:relative_induction_principle_terminal})} \\
  & \CRen_\CS && { }\to{ } && \CS_{\Th} && \quad\text{Normalization \cite{altenkirch_et_al:LIPIcs:2016:5972,CoquandNormalization} (Theorem~\ref{thm:relative_induction_principle_renamings})} \\
  & \square && { }\to{ } && \CS_{\mathsf{CTT}} && \quad\text{(Homotopy/strict) canonicity for cubical type theory \cite{coquand2021canonicity} (Theorem~\ref{thm:relative_induction_principle_cubical})} \\
  & \mathcal{A}_{\square} && { }\to{ } && \CS_{\mathsf{CTT}} && \quad\text{Normalization for cubical type theory \cite{NormalizationCTT} (Theorem~\ref{thm:relative_induction_principle_cubical_renaming})}
\end{alignat*}
We also show how to prove canonicity and normalization by instantiating the relative induction principles for $1_\CCat \to \CS_\Th$ and $\CRen_\CS \to \CS_\Th$.
We don't prove canonicity nor normalization for cubical type theory, but we expect that the currently known proofs could be reformulated as instances of the relative induction principles for $\square \to \CS_{\mathsf{CTT}}$ and $\mathcal{A}_{\square} \to \CS_{\mathsf{CTT}}$.

Typically, the category $\CR$ can be described as the initial object of some category of structured categories.
A relative induction principle with respect to the functor $F$ is an induction principle that combines the universal properties of $\CR$ and $\CS$.
In our previous work~\cite{DBLP:journals/corr/abs-2102-11649}, the relative induction principles were stated in terms of the rather ad-hoc notions of ``displayed models without context extensions'' and ``relative sections''.
In the present work, the input of a relative induction principle is a displayed higher-order model $\MS^\bullet$, and the result is just a (first-order) section $\sem{-}$ of $\CScone_{\MS^\bullet}$.
The previous notion of ``displayed model without context extension'' is recovered in the special case of displayed higher-order models over $F^\ast(\CTele_\MS)$.
The following diagram illustrates the constructions involved in the statement and proof of a relative induction principle ($F : \CR \to \CS$). \\[0.5em]
\begin{tikzpicture}
  \node(labelext) {External};
  \node(labelpshs)[right= of labelext] {Internal to $\CPsh(\CS)$};
  \node(labelpshc)[right= of labelpshs] {Internal to $\CPsh(\CR)$};
  \node(labelfo) at (-2,-1.2) {First-order};
  \node(labelho) at (-2,-3) {Higher-order};
  \node(cs) at (0,-1.2) {$\CS$};
  \node(ms) at (2.3,-3) {$\MS$};
  \node(teles) at (2.3,-1.2) {$\CTele_\MS$};
  \node(telec) at (6.6,-1.2) {$F^\ast(\CTele_\MS)$};
  \node(scone) at (9.5,-1.2) {$\CScone_{\MS^\bullet}$};
  \node(dispho) at (9.5,-3) {$\MS^\bullet$};
  \draw[dashed, opacity=0.3] (1.3,0.3) -- (1.3,-3.5);
  \draw[dashed, opacity=0.3] (5.5,0.3) -- (5.5,-3.5);
  \draw[dashed, opacity=0.3] (-3.1,-2.2) -- (10.9,-2.2);
  \draw[|->] (cs) -- (ms) node[near end,left=0pt] { Internalization };
  \draw[|->] (ms) -- (teles) node[midway,right=0pt,align=left,text width=4cm] { {Telescopic\\ contextualization} };
  \draw[|->] (teles) -- (telec) node[midway,above=1pt] { Restriction };
  \draw[|->] (dispho) -- (scone) node[midway,left=0pt,text width=2.5cm,align=right] { {$\CScone$-\\contextualization} };
\end{tikzpicture}

Both $\MS^\bullet$ and $\CScone_{\MS^\bullet}$ are displayed over the internal first-order model $F^\ast(\CTele_\MS)$.
\[ \begin{tikzcd}
    \MS^\bullet
    \ar[rd, -{Triangle[open]}]
    \ar[r, maps to]
    & \CScone_{\MS^\bullet}
    \ar[d, -{Triangle[open]}]
    \\
    & F^\ast(\CTele_\MS)
    \ar[u, bend right, dashed, "{\sem{-}}"']
  \end{tikzcd}
\]
  
An important feature of our work is that the section $\sem{-}$ admits good computational behaviour, although we do not formally analyze this behaviour.
In fact, part of our understanding comes from looking at the computational behaviour of the congruence operation $\ap$ in higher observational type theory~\cite{HOTT-shulman,HOTT-types}, which is also a morphism of (first-order) models internally to presheaves over the syntax of H.O.T.T.

\subsection*{Related work}

\subparagraph*{Logical relations and categorical gluing.}

The initial motivation for this work was the understanding of algebraic and reduction-free normalization proofs for dependent type theories.
Variants of categorical gluing were used to prove canonicity or normalization for simple types~\cite{DBLP:conf/ctcs/AltenkirchHS95,DBLP:conf/ppdp/Fiore02,NormalizationByGluingFreeLT}, System F~\cite{DBLP:conf/lics/AltenkirchHS96}, and dependent types~\cite{altenkirch_et_al:LIPIcs:2016:5972,CoquandNormalization,DBLP:journals/lmcs/CoquandHS22}. These can be contrasted with reduction based normalization proofs such as~\cite{DBLP:journals/pacmpl/0001OV18,pujet:hal-03857705}.

Logical relations were used to prove syntactic parametricity for type theory~\cite{bernardy12parametricity,popl16} and definability for simply typed lambda calculus~\cite{DBLP:conf/tlca/JungT93}. It was shown that categorical gluing generalizes both syntactic parametricity and canonicity proofs~\cite{GluingTT}.

\subparagraph*{Logical frameworks and higher-order abstract syntax.}
Higher-order abstract syntax~(HOAS) is the use of metatheoretic functions to specify syntactic binders.
Hofmann~\cite{SemAnalysisHOAS} has explained how HOAS can be interpreted in the internal language of presheaf categories.

Uemura~\cite{GeneralFrameworkSemanticsTT} has given a general definition of type theory based on these ideas, which we will call second-order generalized algebraic theories (SOGATs).
It generalizes notions of second-order algebraic theories that have been studied by Fiore and Mahmoud~\cite{DBLP:journals/corr/FioreM13}. Harper presents an equational variant of logical framework~\cite{DBLP:journals/corr/abs-2106-01484} for defining theories with bindings corresponding to SOGATs. We believe that the constructions in our paper generalize to any SOGAT.
Gratzer and Sterling \cite{SynCategoriesSketchingAndAdequacy} propose using LCCCs to define higher order theories (without representability conditions) which correspond to our higher order models, but we also consider first order models.

\subparagraph*{Synthetic Tait computability.}

Synthetic Tait computability~(STC, \cite{SterlingThesis,sterling:2022:naive}) is an approach that relies on the internal language of the Artin gluing of toposes, typically constructed by gluing syntactic and semantic toposes.
A pair of open/closed modalities can be used to distinguish the syntactic and semantic parts in the internal language of the Artin gluing.
STC has been applied to proofs of normalization for cubical type theory~\cite{NormalizationCTT}, multimodal type theory~\cite{NormalizationMTT} and simplicial type theory~\cite{weinberger-ahrens-buchholtz-north:2022:types}.

Both STC and our approach provide a synthetic setting for proofs of metatheorems.
Our approach is perhaps simpler in some aspects, \eg we don't use any modalities and don't need to use realignment; we acknowledge that this is partly a matter of preference.
The main advantage of our approach over STC is that we have an internal specification of the result of an induction principle.

In his thesis~\cite{SterlingThesis}, Sterling briefly discusses the notion of Henkin model of a higher-order theory.
A Henkin model is a higher-order model with a non-standard interpretation of the dependent products.
We note that Henkin models are closely related to first-order models: the Henkin models of a second-order theory are equivalent to the democratic first-order models (this is~\cite[Theorem~7.30]{GeneralFrameworkSemanticsTT}).

\subsection*{Contributions}

The main takeaway of this paper is that when using HOAS, we should not discard the first order presentation,
even in an internal setting. In particular, the right notion of displayed
higher-order model (input of an induction principle) lies over a
first-order model. The output of an induction principle is also a
section of first-order models, still internal.

The technical contributions are:
\begin{itemize}
\item the relative induction principles which combine the initiality
  of the underlying category and initiality of the syntax; we derive
  four such induction principles;
\item the internalization, contextualization, externalization
  constructions which let us formulate the relative induction
  principles;
\item the derivation of internal sections by analyzing the category of sections.
\end{itemize}
The main takeaway is supported by applications of the induction principles: boilerplate-free proofs of canonicity, normalization and syntactic parametricity.
To our knowledge, this paper is the first defining the latter in a synthetic setting.
The fact that the notion of section is specified internally has the feature that we can reuse it directly in subsequent inductions.
We exploit this in the proof of uniqueness of normal forms which is proven in a separate step after normalization, relying on how the section computes normal forms.

\subsection*{Structure of the paper}

In Section~\ref{sec:fo_ho_models} we
define first-order- and higher-order models of an example type theory,
and constructions relating them.  In
Section~\ref{sec:displayed_ho_models}, we define displayed higher
order models which collect the motives and methods of an induction
principle. We also define sconing which turns a displayed higher order
model into its first order variant, also providing a notion of
section. Then we move on to applications: we prove canonicity in
Section~\ref{sec:canonicity} using an induction principle relative to
the empty context
(Theorem~\ref{thm:relative_induction_principle_terminal}) which is a
trivial consequence of our previous definitions. In
Section~\ref{sec:normalization}, we prove normalization using an
induction principle relative to renamings
(Theorem~\ref{thm:relative_induction_principle_renamings}). This
induction principle is proved using the methods described in
Section~\ref{sec:generic_section}. Another application (syntactic
parametricity) of
Theorem~\ref{thm:relative_induction_principle_terminal} is described
in Appendix~\ref{sec:parametricity}, and the cubical variants of the
above induction principles are proven in
Appendix~\ref{sec:cubical}.

\subsection*{Background}\label{sec:background}

We assume some familiarity with the categorical semantics of type theory~\cite{CwFs:USD,popl16} and with the use of extensional type theory as the internal language of presheaf categories \cite{Hofmann97syntaxand}.

We use the notion of locally representable dependent presheaf to encode context extensions (see \cref{def:local_rep}).
This definition is a more indexed formulation of the notion of representable natural transformation, which was used by Awodey to give an alternative definition of CwFs, known as natural models~\cite{awodey_2018}.



\section{First-order and higher-order models}\label{sec:fo_ho_models}

Our running example is a minimal dependent type theory $\Th$ with only $\Pi$-types, but our constructions directly generalize to larger type theories.
Some other type theories are considered in the appendix, including dependent type theories with universes and cubical type theories.
We leave generalization to arbitrary second-order generalized algebraic theories to future work.

We define notions of higher-order and first-order models for $\Th$.
A higher-order model is essentially a universe closed under dependent products, while a first-order model is a category with families (CwF) equipped with $\Pi$-types.
The higher-order models are the models of some higher-order theory $\Th^{\mathsf{ho}}$ (a theory whose operations can have a higher-order sort; classified by some locally cartesian closed category), whereas the first-order models are the models of some (first-order) essentially%
\footnote{The distinction between generalized algebraic theories and essentially algebraic theories is not relevant in this paper.}
algebraic theory $\Th^{\mathsf{fo}}$ (a theory whose operations have a first-order sort; classified by some finitely complete category).

\subsection{Definitions}

\begin{definition}
  A \defemph{higher-order model} of $\Th$ consists of the following families and operations:
  \begingroup{}\allowdisplaybreaks{}
  \begin{alignat*}{2}
    & \Ty && :
      \SSet, \\
    & \Tm && :
      \Ty \to \SSet, \\
    & \Pi && :
      \forall (A : \Ty) (B : \Tm(A) \to \Ty) \to \Ty, \\
    & \app && :
      \forall A\ B\ (f : \Tm(\Pi(A,B)))\ (a : \Tm(A)) \to \Tm(B(a)), \\
    & \lam && :
      \forall A\ B\ (b : (a : \Tm(A)) \to \Tm(B(a))) \to \Tm(\Pi(A,B)), 
  \end{alignat*}\endgroup{}
  subject to equations corresponding to the $\beta$- and $\eta$-rules: \\
  \begin{minipage}{0.5\textwidth}
  \[
  \lam(\lambda a \mapsto \app(f,a)) = f,
  \]
  \end{minipage}
  \begin{minipage}{0.5\textwidth}
  \[
  \app(\lam(b),a) = b(a). \tag*{\lipicsEnd{}}
  \]
  \end{minipage}
\end{definition}


\begin{definition}
  A \defemph{first-order model} of $\Th$ is a CwF $\CC$ equipped with $\Pi$-types. 
  \lipicsEnd{}
\end{definition}

The notion of first-order model can be presented by a first-order generalized algebraic theory $\Th^{\mathsf{fo}}$.
Equivalently, a first-order model is a category $\CC$ with a terminal object $1_\CC$, along with a (global) higher-order model $\MC$ of $\Th$ in $\CPsh(\CC)$ such that the dependent presheaf $\MC.\Tm$ is locally representable.
The higher-order model $\MC$ is called the internalization of $\CC$.

We use blackboard bold letters $\MM$, $\MN$, $\MC$, $\MS$, $\MR$, \etc to refer to higher-order models, and calligraphic letters $\CM$, $\CN$, $\CC$, $\CS$, $\CR$, \etc to refer to first-order models.
We try to use the corresponding letter for the underlying internal higher-order model of a first-order model, \eg if $\CC$ is an external first-order model, we use $\MC$ for its underlying internal higher-order model in $\CPsh(\CC)$.
We denote the components of a model by $\MC.\Ty$, $\CC.\Tm$, $\CC.\Pi$, \etc.

We write $\CMod_\Th$ for the $1$-category of first-order models of $\Th$, and $\CS_\Th$ or just $\CS$ for its initial object (the letter ``S'' standing for both syntax and substitutions).

\subsection{Contextualization}

In general, we almost never want to construct all of the components of a first-order model explicitly, because checking functoriality and naturality conditions without relying on the internal language of a presheaf model is tedious.
For some models however, the functoriality and naturality conditions hold trivially.
This is the case for the ``standard model'' of type theory over the category of sets: when defining this standard model in intensional type theory, all naturality and functoriality conditions hold definitionally.

The construction of the standard model generalizes to the construction of a first-order model from a higher-order model, which we now describe.

\begin{construction}[$\CSet$-Contextualization]\label{constr:contextualization}
  Let $\MM$ be a higher-order model of $\Th$.
  We construct a first-order model $\CSet_\MM$, called the \defemph{$\CSet$-contextualization} of $\MM$.
  Its underlying category is the category $\CSet$ of sets.

  The types and terms of $\CSet_\MM$ are indexed families of types and terms of $\MM$:
  \begin{itemize}
  \item A type over $\Gamma \in \CSet$ is a function $\Gamma \to \MM.\Ty$.
  \item Type substitution along a function $f : \Delta \to \Gamma$ is precomposition with $f$.
  \item A term of type $A : \Gamma \to \MM.\Ty$ is a dependent function $(\gamma : \Gamma) \to \MM.\Tm(A(\gamma))$.
  \item Term substitution along a function $f : \Delta \to \Gamma$ is precomposition with $f$.
  \item The functoriality of substitution is associativity of function composition.
  \end{itemize}

  The context extensions are given by dependent sums in $\CSet$:
  \begin{alignat*}{3}
    & (\Gamma.A) && \triangleq{ } && (\gamma : \Gamma) \times \MM.\Tm(A(\gamma)).
  \end{alignat*}

  The type-theoretic operations are all defined pointwise:
  \begin{alignat*}{3}
    & \CSet_\MM.\Pi(\Gamma,A,B) && \triangleq{ }
    && \lambda \gamma \mapsto \MM.\Pi(A(\gamma), \lambda a \mapsto B(\gamma,a)), \\
    & \CSet_\MM.\app(\Gamma,f,a) && \triangleq{ }
    && \lambda \gamma \mapsto \MM.\app(f(\gamma), a(\gamma)), \\
    & \CSet_\MM.\lam(\Gamma,b) && \triangleq{ }
    && \lambda \gamma \mapsto \MM.\lam(\lambda a \mapsto b(\gamma,a)).
  \end{alignat*}
  The $\beta$- and $\eta$-rules for $\CSet_\MM$ hold as a consequence of the corresponding rules for $\MM$.

  The naturality conditions are all trivial.
  For example, in the case of the $\Pi$ type former, we have to check
  $\Pi(\Gamma,A,B)[f] = \Pi(\Delta,A[f],B[f^+])$
  for any $f : \Delta \to \Gamma$, where $f^+(\gamma,a) = (f(\gamma),a)$.
  This amounts to checking the equality
  {\small \[ (\lambda \gamma \mapsto \MM.\Pi(A(\gamma), \lambda a \mapsto B(\gamma,a))) \circ f = (\lambda \gamma \mapsto \MM.\Pi((A \circ f)(\gamma), \lambda a \mapsto (B \circ f^+)(\gamma,a))).
      \lipicsEnd{}
  \] }
\end{construction}

\begin{remark}
  Note that the underlying category $\CSet$ of $\CSet_\MM$ now has two CwF structures:
  \begin{itemize}
  \item An \emph{inner} CwF structure, as defined in~\cref{constr:contextualization}.
  \item An \emph{outer} CwF structure, corresponding to the usual CwF structure on the category of sets, modeling extensional type theory.
  \end{itemize}
  Together, they form a model of two-level type theory~\cite{TwoLevelTypeTheoryAndApplications}.
\end{remark}

\subsection{Telescopic contextualization}

A first-order order model is said to be contextual when every object can be uniquely written as an iterated context extension starting from the empty context.
The contextual first-order models form a coreflective subcategory $\CMod_\Th^\cxl$ of $\CMod_\Th$: the inclusion $\CMod_\Th^\cxl \to \CMod_\Th$ has a right adjoint $\cxl$: the \emph{contextual core} $\cxl(\CC)$ has as objects the iterated context extensions of $\CC$, also known as telescopes, over the empty context.

\begin{definition}
  Let $\MM$ be a higher-order model of $\Th$.
  The \defemph{telescopic contextualization} $\CTele_\MM$ is the contextual core of the $\CSet$-contextualization $\CSet_\MM$.
  \lipicsEnd{}
\end{definition}

By general properties of the contextual core, there is a model morphism $\dquote{-} : \CTele_\MM \to \CSet_\MM$ that is bijective on types and terms.
(There is a cofibrantly generated factorization system on first-order models with such morphisms as its right class.
The contextual models are precisely those in the left class.)

When working internally to some presheaf category $\CPsh(\CC)$, another related construction involves the internal subcategory spanned by $\yo : \Ob_\CC \to \SSet$, where $\Ob_\CC$ is the discrete presheaf on the set of objects of $\CC$, and $\yo$ internalizes the Yoneda embedding.
This ``Yoneda universe'' has been used by Hu \etal{}~\cite{CategoryTheoreticViewContextualTypes} to give semantics to contextual types.

\subsection{Internal first-order models}

Since the notion of first-order model is described by an essentially algebraic theory, it can be interpreted in any finitely complete category.
In particular, there is a notion of internal first-order model in any category $\widehat{\CC}$ of small presheaves, obtained by letting $\SSet$ stand for the Hofmann-Streicher universe of the presheaf topos $\CPsh(\CC)$ in the definition of model.

\begin{proposition}\label{prop:eqv_definitions_fom}
  The following three notions are equivalent:
  \begin{enumerate}
  \item\label{itm:internal_fo_model_1} First-order models of $\Th$ in $\CPsh(\CC)$;
  \item\label{itm:internal_fo_model_2} Finite limit preserving functors $\Th^{\mathsf{fo}} \to \widehat{\CC}$, where $\Th^{\mathsf{fo}}$ is the finitely complete category classifying the first-order models of $\Th$;
  \item\label{itm:internal_fo_model_3} Functors $\CC \to \CMod_\Th^\op$.
  \end{enumerate}
\end{proposition}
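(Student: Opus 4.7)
The plan is to establish the equivalences in the order $(\ref{itm:internal_fo_model_1}) \Leftrightarrow (\ref{itm:internal_fo_model_2}) \Leftrightarrow (\ref{itm:internal_fo_model_3})$, using only the universal properties of the classifying category and of presheaves. Throughout I will work under the standing assumption, stated in the paper, that first-order models are described by an essentially algebraic theory.

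For $(\ref{itm:internal_fo_model_1}) \Leftrightarrow (\ref{itm:internal_fo_model_2})$: by definition of the classifying finitely complete category, for any finitely complete category $\CE$ there is a natural equivalence between models of $\Th^{\mathsf{fo}}$ internal to $\CE$ and finite-limit-preserving functors $\Th^{\mathsf{fo}} \to \CE$. I would instantiate this with $\CE = \widehat{\CC}$, which is finitely complete (finite limits in presheaves exist and are computed pointwise). One direction sends a finite-limit-preserving $F : \Th^{\mathsf{fo}} \to \widehat{\CC}$ to its value on the generic model inside $\Th^{\mathsf{fo}}$; the inverse interprets every sort, operation and equation of the essentially algebraic theory by the corresponding structure of a given internal first-order model, using the universal property of $\Th^{\mathsf{fo}}$. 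The only thing to check is that the internal language notion of first-order model used in the definition at the start of Section~\ref{sec:fo_ho_models} agrees with the algebraic interpretation; this is standard for essentially algebraic theories and holds because $\widehat{\CC}$ supports the Hofmann–Streicher universe used to phrase the internal definition.

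For $(\ref{itm:internal_fo_model_2}) \Leftrightarrow (\ref{itm:internal_fo_model_3})$: by currying, functors $\Th^{\mathsf{fo}} \to \widehat{\CC}$ correspond to functors $\Th^{\mathsf{fo}} \times \CC^\op \to \SSet$, equivalently functors $\CC^\op \to [\Th^{\mathsf{fo}}, \SSet]$. Under this correspondence, a functor $F : \Th^{\mathsf{fo}} \to \widehat{\CC}$ preserves finite limits if and only if, for every object $c \in \CC$, the composite $\Th^{\mathsf{fo}} \xrightarrow{F} \widehat{\CC} \xrightarrow{\mathrm{ev}_c} \SSet$ does, because finite limits in $\widehat{\CC}$ are pointwise and the evaluation functors $\mathrm{ev}_c$ jointly reflect them. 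Finite-limit-preserving functors $\Th^{\mathsf{fo}} \to \SSet$ are exactly the (external, set-valued) first-order models of $\Th$, that is, the objects of $\CMod_\Th$; functoriality in $c$ lands in $\CMod_\Th$ itself with morphisms going in the same direction as in $\widehat{\CC}$, hence gives a functor $\CC^\op \to \CMod_\Th$, or equivalently $\CC \to \CMod_\Th^\op$. The inverse of the correspondence is transparent: given $\CC \to \CMod_\Th^\op$, form the induced functor $\Th^{\mathsf{fo}} \to \widehat{\CC}$ by sending $t \in \Th^{\mathsf{fo}}$ to the presheaf $c \mapsto \CM_c(t)$.

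Composing the two equivalences naturally yields $(\ref{itm:internal_fo_model_1}) \Leftrightarrow (\ref{itm:internal_fo_model_3})$, and the resulting correspondence matches the intuitive description: an internal first-order model in $\CPsh(\CC)$ is the same data as a $\CC$-indexed family of external models, with transition morphisms going the opposite way of morphisms in $\CC$. The main point requiring care is the size condition implicit in ``small presheaves'' and in the size of $\Th^{\mathsf{fo}}$; I would fix a universe and note that $\Th^{\mathsf{fo}}$ is essentially small, so all functors involved factor through $\widehat{\CC}$ rather than a strictly larger variant. Apart from this bookkeeping, the argument is a direct combination of two standard universal properties and should not present any genuine obstacle.
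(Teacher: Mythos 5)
Your proposal is correct and takes essentially the same route as the paper's proof, which handles (1)$\Leftrightarrow$(2) by the classifying property of $\Th^{\mathsf{fo}}$ and (2)$\Leftrightarrow$(3) by the pointwise computation of finite limits in $\widehat{\CC}$, citing Johnstone D1.2.14 for the whole statement. You merely make explicit the currying step, the joint reflection of limits by evaluation functors, and the size bookkeeping that the paper leaves implicit.
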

\begin{proof}
  This is well-known~\cite[D1.2.14]{Johnstone:592033}.
  The equivalence between (\ref{itm:internal_fo_model_1}) and (\ref{itm:internal_fo_model_2}) is the fact that $\Th^{\mathsf{fo}}$ classifies the first-order models of $\Th$.
  The equivalence between (\ref{itm:internal_fo_model_2}) and (\ref{itm:internal_fo_model_3}) follows from the fact that finite limits in $\widehat{\CC}$ are computed pointwise.
\end{proof}

\subsection{Restriction and externalization}

Another important operation on first-order models is the restriction of a first-order model $\CM$ internal to $\CPsh(\CD)$ along a functor $F : \CC \to \CD$.
This restricted model $F^{\ast}(\CM)$ is a first-order model internal to $\CPsh(\CC)$.
If $\CM$ is seen as a functor $\CD \to \CMod_\Th^\op$, then the restriction $F^\ast(\CM) : \CC \to \CMod_\Th^\op$ is simply the precomposition $(\CM \circ F)$.
If $\CM$ is seen instead as a finite-limits preserving functor $\Th^{\mathsf{fo}} \to \CPsh(\CD)$, then $F^\ast(\CM)$ is postcomposition with the inverse image functor $F^\ast : \CPsh(\CD) \to \CPsh(\CC)$.
These two definitions coincide up to the equivalence of~\cref{prop:eqv_definitions_fom}; which is thus natural in the base category.

\begin{remark}
  A more explicit computation of the restriction can be given in the internal language of $\CPsh(\CC)$ using the dependent right adjoint associated to the adjunction $(F_! \dashv F^\ast)$.
  When $\CM$ is the $\CSet$- or telescopic contextualization of a higher-order model, then the dependent right adjoint allows for the use of HOAS when working with $F^\ast(\CM)$.
  \lipicsEnd{}
\end{remark}

A special case of the restriction is the \emph{externalization} of an internal first-order model.
\begin{definition}
  Let $\CC$ be any category with a terminal object $1_\CC$, and consider the functor $1_\CC : 1_\CCat \to \CC$ that selects this terminal object.
  For any internal first-order model $\CM$ in $\CPsh(\CC)$, we have an external first-order model $1_\CC^\ast(\CM)$, called the \defemph{externalization} of $\CM$.
  \lipicsEnd{}
\end{definition}

Given any higher-order model $\MM$ in $\CPsh(\CC)$, we can construct the externalization $1_{\CC}^{\ast}(\CTele_\MM)$ of its telescopic first-order model.
Up to isomorphism, all external contextual first-order models arise as the externalization of a telescopic contextualization.
\begin{lemma}
  Let $\CC$ be an external first-order model, with $\MC$ its underlying internal higher-order model.
  Then $1_\CC^\ast(\Tele_\MC)$ is the contextual core of $\CC$.
\end{lemma}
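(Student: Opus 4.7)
The plan is to construct an explicit isomorphism $\Phi : 1_\CC^\ast(\CTele_\MC) \xrightarrow{\cong} \cxl(\CC)$ of first-order models by induction on telescope length. I send the empty telescope to the terminal object $1_\CC$, and an extension $\Gamma.A$ (where $A : \Gamma \to \MC.\Ty$ is the internal type) to $\Phi(\Gamma).A'$ in $\CC$, where $A' \in \CC.\Ty(\Phi(\Gamma))$ is the external type corresponding to $A$ via the identification described below. Since $\cxl(\CC)$ is by definition the full subcategory of $\CC$ on iterated context extensions of $1_\CC$, this construction is essentially surjective onto $\cxl(\CC)$ by design.

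The invariant to maintain during the induction is that the underlying presheaf of a telescope $\Gamma$, obtained by unfolding the iterated internal dependent sums of Construction~\ref{constr:contextualization}, is naturally isomorphic to the representable $\yo(\Phi(\Gamma))$. This holds at the empty telescope since both sides equal $\yo(1_\CC)$, and it propagates through a context extension by invoking the local representability of $\MC.\Tm$: the dependent sum $(\gamma : \yo(\Phi(\Gamma))) \times \MC.\Tm(A\gamma)$ is represented by $\Phi(\Gamma).A'$. The external type $A'$ itself is obtained from $A$ by Yoneda, as a morphism $\yo(\Phi(\Gamma)) \to \MC.\Ty$ in $\CPsh(\CC)$ is the same as an element of $\MC.\Ty(\Phi(\Gamma)) = \CC.\Ty(\Phi(\Gamma))$.

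With the invariant in hand, the remaining checks are routine applications of the Yoneda lemma. Morphisms in $1_\CC^\ast(\CTele_\MC)$ from $\Delta$ to $\Gamma$ are morphisms of underlying presheaves $\yo(\Phi(\Delta)) \to \yo(\Phi(\Gamma))$, and hence morphisms $\Phi(\Delta) \to \Phi(\Gamma)$ in $\CC$, so $\Phi$ is fully faithful. Similarly, types and terms over $\Gamma$ correspond to types and terms over $\Phi(\Gamma)$ in $\CC$, giving bijectivity on types and terms. Strict preservation of context extensions is exactly the invariant, and strict preservation of $\Pi$-types reduces under Yoneda to the tautology that the pointwise $\Pi$ of Construction~\ref{constr:contextualization} agrees with the higher-order product of $\MC$, which is by definition the internalization of $\CC.\Pi$.

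The main obstacle is the bookkeeping across three levels: the external category $\CC$, the internal language of $\CPsh(\CC)$ in which $\CTele_\MC$ lives, and the externalization via $1_\CC$. The crucial bridge is the local representability of $\MC.\Tm$, which identifies the internal dependent sums in $\CSet_\MC$ with the CwF context extensions of $\CC$; once this is set up, every subsequent verification unwinds mechanically.
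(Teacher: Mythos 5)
Your proof is correct, but it is organized differently from the paper's. The paper does not argue at the terminal object directly: it first identifies the whole internal model $\CTele_\MC$, viewed via \cref{prop:eqv_definitions_fom} as a functor $\CC \to \CMod_\Th^\op$, with the contextual slice functor $\Gamma \mapsto (\CC \sslash \Gamma)$ (\cref{lem:contextual_slice}, justified there only by ``unfolding the definitions''), and then obtains the lemma by evaluating at $1_\CC$, since $(\CC \sslash 1_\CC)$ is the contextual core. Your argument is, in effect, an explicit execution of that unfolding in the single fiber over $1_\CC$: the induction on telescope length with the invariant that the unfolded iterated dependent sum is representable by the corresponding iterated context extension is exactly where the local representability of $\MC.\Tm$ (\cref{def:local_rep}) enters, and the Yoneda steps identifying global internal types, terms and maps with external ones over $\Phi(\Gamma)$ are the content the paper leaves implicit. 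What your route buys is a self-contained, checkable proof of the stated lemma, making visible the bookkeeping between $\CC$, $\CPsh(\CC)$ and the externalization; what the paper's route buys is the stronger, natural-in-$\Gamma$ identification of $\CTele_\MC$ with $(\CC \sslash -)$, which is reused later (e.g.\ in \cref{lem:cwf_morphism_compatibility} and via \cref{lem:contextual_slice_ext}), whereas your fiberwise isomorphism at $1_\CC$ would not by itself supply those later applications.
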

\begin{proof}
  See~\cref{lem:contextual_slice_terminal}.
\end{proof}

In particular, since the initial model $\CS$ is contextual, the externalization $1_\CS^\ast(\Tele_\MS)$ of its telescopic contextualization is isomorphic to $\CS$.

We can also construct the externalization $1_\CC^{\ast}(\CSet_\MC)$ of the internal $\CSet$-contextualization of an higher-order model $\MC$.
The underlying category of $1_\CC^\ast(\CSet_\MC)$ is the category of presheaves over $\CC$ (restricted to some universe level); and $1_\CC^\ast(\CSet_\MC)$ is an external model of two-level type theory (its underlying category is the restriction of $\CPsh(\CC)$ to some universe level).
Recall that there is, internally to $\CPsh(\CC)$, a morphism $\dquote{-} : \CTele_\MC \to \CSet_\MC$ of first-order models.
This morphism can also be externalized, giving an external morphism $1_\CC^\ast(\dquote{-}) : 1_\CC^\ast(\Tele_\MC) \to 1_\CC^\ast(\CSet_\MC)$ of first-order models.
When $1_\CC^\ast(\Tele_\MC) \cong \CC$, this is a simple construction of the embedding of $\CC$ into the presheaf model of two-level type theory.


\section{Displayed higher-order models}\label{sec:displayed_ho_models}

\subsection{Motives and methods}

We now define the notion of displayed higher-order model, which collects the motives and methods of induction principles.
One could expect that a displayed higher-order model would be displayed over a base higher-order model.
We instead define the notion of displayed higher-order model over a base first-order model; it is always possible to turn higher-order models into first-order models using a contextualization, but not every first-order model arises in this way.

\begin{definition}
  Let $\CM$ be a first-order model of $\Th$.
  A \defemph{displayed higher-order model} $\MM^\bullet$ over $\CM$ consists of the following data:
  \begingroup{}\allowdisplaybreaks
  \begin{alignat*}{3}
    & \Ty^{\bullet} && :{ }
    && \CM.\Ty(1_\CM) \to \SSet, \\
    & \Tm^{\bullet} && :{ }
    && \forall (A : \CM.\Ty(1_\CM))\ (A^{\bullet} : \Ty^{\bullet}(A)) \to \CM.\Tm(1_\CM,A) \to \SSet, \\
    & \Pi^{\bullet} && :{ }
    && \forall (A : \CM.\Ty(1_\CM))\ (A^{\bullet} : \Ty^{\bullet}(A)) \\
    &&&&& \phantom{\forall} (B : \CM.\Ty(1_\CM.A))\ (B^{\bullet} : \forall (a : \CM.\Tm(1_\CM,A))(a^{\bullet} : \Tm^{\bullet}(A^{\bullet},a)) \to \Ty^{\bullet}(B[a])) \\
    &&&&& { }\to \Ty^{\bullet}(\Pi(A,B)), \\
    & \app^\bullet && :{ }
    && \forall A\ A^\bullet\ B\ B^\bullet\ (f : \CM.\Tm(1_\CM,\Pi(A,B)))\ (f^\bullet : \Tm^\bullet(\Pi^{\bullet}(A^{\bullet},B^{\bullet}), f)) \\
    &&&&& \phantom{\forall} (a : \CM.\Tm(1_\CM,A))\ (a^{\bullet} : \Tm^{\bullet}(A^{\bullet},a)) \\
    &&&&& { }\to \Tm^{\bullet}(B^{\bullet}(a^{\bullet}),\app(f,a)), \\
    & \lam^\bullet && :{ }
    && \forall A\ A^\bullet\ B\ B^\bullet\ (b : \CM.\Tm(1_\CM.(a:A),B[a])) \\
    &&&&& \phantom{\forall} (b^\bullet : \forall (a : \CM.\Tm(1_\CM,A))(a^{\bullet} : \Tm^{\bullet}(A^{\bullet},a)) \to \Tm^{\bullet}(B^{\bullet}(a^{\bullet}),b[a])) \\
    &&&&& { } \to \Tm^\bullet(\Pi^{\bullet}(A^{\bullet},B^{\bullet}),\lam(b)),
  \end{alignat*}\endgroup{}
  such that the following equalities hold: \\
  \begin{minipage}{0.5\textwidth}
  \[
  \app^\bullet(\lam^\bullet(b^\bullet),a^\bullet) = b^\bullet(a^\bullet),
  \]
  \end{minipage}
  \begin{minipage}{0.5\textwidth}
  \[
  \lam^\bullet(\lambda a^\bullet \mapsto \app^\bullet(f^\bullet,a^\bullet)) = f^\bullet. \tag*{\lipicsEnd{}}
  \]
  \end{minipage}
\end{definition}

Most of the components of a displayed higher-order model only depend on the closed types and terms of $\CM$; only the binders need to refer to open types and terms.

Note that the data of a displayed higher-order model over the terminal first-order model is equivalent to the data of a non-displayed higher-order model.

\subsection{Displayed contextualization}

Given any displayed higher-order model $\MM^\bullet$ over $\CM$, we construct a displayed first-order model over $\CM$.
This construction is a displayed generalization of the $\CSet$-contextualization.

The underlying displayed category of this construction is the \emph{Sierpinski cone}, or scone, of $\CM$.
The scone of a category $\CC$ with a terminal object is the comma category $(\CSet \downarrow \Gamma_\CC)$, where $\Gamma_\CC : \CC \to \CSet$ is the global section functor $\Gamma_\CC = \CC(1_\CC, -)$.

\begin{construction}[Displayed contextualization]\label{constr:internal_gluing_model}
  Fix a displayed higher-order model $\MM^\bullet$ over a first-order model $\CM$.
  We construct a displayed first-order model $\CScone_{\MM^\bullet}$ over $\CM$, called the \defemph{displayed contextualization} of $\MM^\bullet$.
  \begin{itemize}
  \item An object of $\CScone_{\MM^\bullet}$ over $\Gamma \in \CM$ is a family
    \[ \Gamma^\dagger : \CM(1_\CM, \Gamma) \to \SSet \]
    over the global elements (\ie closing substitutions) of $\Gamma$.
  \item A morphism of $\CScone_{\MM^\bullet}$ from $\Gamma^\dagger$ to $\Delta^\dagger$ over a base morphism $f : \CM(\Gamma, \Delta)$ is a family
    \[ f^\dagger : \forall (\gamma : \CM(1_\CM,\Gamma)) \to \Gamma^\dagger(\gamma) \to \Delta^\dagger(f \circ \gamma). \]

    The identity displayed morphism is given by
    \[ \id^\dagger \triangleq \lambda \gamma\ \gamma^\bullet \mapsto \gamma^\bullet, \]
    whereas the composition of two displayed morphisms $f^\dagger$ and $g^\dagger$ is
    \[ f^\dagger \circ^\dagger g^\dagger \triangleq \lambda \gamma\ \gamma^\bullet \mapsto f^\dagger(g^\dagger(\gamma^\bullet)). \]

  \item A type of $\CScone_{\MM^\bullet}$ over an object $\Gamma^\dagger$ and a type $A : \CM.\Ty(\Gamma)$ is a function
    \[ A^\dagger : \forall \gamma\ (\gamma^\dagger : \Gamma^\dagger(\gamma)) \to \Ty^\bullet(A[\gamma]). \]

    The restriction of $A^\dagger$ along a displayed morphism $f^\dagger$ is
    \[ A^\dagger[f^\dagger] \triangleq \lambda \gamma^\bullet \mapsto A^\dagger(f^\dagger(\gamma^\bullet)). \]

  \item A term of $\CScone_{\MM^\bullet}$ of type $A^\dagger$ over an object $\Gamma^\dagger$ and a term $a : \CM.\Tm(\Gamma,A)$ is a function
    \[ a^\dagger : \forall \gamma\ (\gamma^{\bullet} : \Gamma^\dagger(\gamma)) \to \Tm^\bullet(A^\dagger(\gamma^\bullet), a[\gamma]). \]

    The restriction of $a^\dagger$ along a displayed morphism $f^\dagger$ is
    \[ a^\dagger[f^\dagger] \triangleq \lambda \gamma^\bullet \mapsto a^\dagger(f^\dagger(\gamma^\bullet)). \]

  \item The empty displayed context is the family
    \[ 1^\dagger \triangleq \lambda \_ \mapsto \Unit. \]
  \item The extension of a displayed context $\Gamma^\dagger$ by a displayed type $A^\dagger$ is the family
    \[ (\Gamma^\dagger.A^\dagger) \triangleq \lambda \angles{\gamma,a} \mapsto (\gamma^\bullet : \Gamma^\dagger(\gamma)) \times A^\dagger(\gamma^\bullet,a). \]

  \item All type- and term- formers are defined pointwise using the corresponding component of $\MM^\bullet$:
    \begingroup{}\allowdisplaybreaks
    \begin{alignat*}{3}
      & \Pi^\dagger(A^\dagger,B^\dagger) && \triangleq{ }
      && \lambda \gamma^\bullet \mapsto \Pi^\bullet(A^\dagger(\gamma^\bullet), \lambda a^\bullet \mapsto B^\dagger(\gamma^\bullet,a^\bullet)), \\
      & \app^\dagger(f^\dagger,a^\dagger) && \triangleq{ }
      && \lambda \gamma^\bullet \mapsto \app^\bullet(f^\dagger(\gamma^\bullet),a^\dagger(\gamma^\bullet)), \\
      & \lam^\dagger(b^\dagger) && \triangleq{ }
      && \lambda \gamma^\bullet \mapsto \lam^\bullet(\lambda a^\bullet \mapsto b^\dagger(\gamma^\bullet,a^\bullet)).
    \end{alignat*}\endgroup{}
  \item The $\beta$- and $\eta$-rules hold as a consequence of the $\beta$- and $\eta$-rules of $\MM^\bullet$.
  \item All naturality conditions are trivial.
    \lipicsEnd{}
  \end{itemize}
\end{construction}

Note that when $\MM$ is a higher-order model seen as a displayed higher-order model over the terminal first-order model, then $\CScone_{\MM}$ is equivalent to $\CSet_{\MM}$.

\subsection{Sections of a displayed higher-order model}

The notion of displayed higher-order model corresponds to the motives and methods of an induction principle.
We now define the notion of section of a displayed higher-order model, corresponding to the result of applying an induction principle: it is simply defined as a section of the displayed contextualization.

\begin{definition}
  A \defemph{section} of a displayed higher-order model $\MM^\bullet$ is a section $\sem{-}$ of its displayed contextualization $\CScone_{\MM^\bullet}$ (in $\CMod_\Th$).
  \lipicsEnd{}
\end{definition}

The definition of section of a displayed higher-order model $\MM^\bullet$ over $\CM$ can be unfolded to the following components:
\begin{itemize}
\item For every object $\Gamma : \CM$, a family
  \[ \sem{\Gamma} : \CM(1_\CM,\Gamma) \to \CSet \]
  of \emph{environments}.
\item For every morphism $f : \CM(\Gamma,\Delta)$, a family
  \[ \sem{f} : \forall \gamma \to \sem{\Gamma}(\gamma) \to \sem{\Delta}(f \circ \gamma) \]
  of maps between environments.
\item For every type $A : \CM.\Ty(\Gamma)$, a family
  \[ \sem{A} : \forall \gamma\ (\gamma^\bullet : \sem{\Gamma}(\gamma)) \to \Ty^\bullet(A[\gamma]) \]
  of displayed types over closures of $A$.
\item For every term $a : \CM.\Tm(\Gamma,A)$, a family
  \[ \sem{a} : \forall \gamma\ (\gamma^\bullet : \sem{\Gamma}(\gamma)) \to \Tm^\bullet(\sem{A}(\gamma^\bullet), a[\gamma]) \]
  of displayed terms over closures of $a$.
\item Subject to functoriality and naturality equations:
  \begin{alignat*}{1}
    & \sem{\id}(\gamma^\bullet) = \gamma^\bullet, \\
    & \sem{f \circ g}(\gamma^\bullet) = \sem{f}(\sem{g}(\gamma^\bullet)), \\
    & \sem{A[f]}(\gamma^\bullet) = \sem{A}(\sem{f}(\gamma^\bullet)), \\
    & \sem{a[f]}(\gamma^\bullet) = \sem{A}(\sem{f}(\gamma^\bullet)).
  \end{alignat*}
\item Such that context extensions are preserved:
  \begin{alignat*}{1}
    & \sem{1_\CM}(\star) = \{\star\},
    \\
    & \sem{\Gamma.A}(\gamma,a) = (\gamma^\bullet : \sem{\Gamma}(\gamma)) \times (a^\bullet : \Tm^\bullet(\sem{A}(\gamma^\bullet), a)), \\
    & \sem{\lambda \gamma \mapsto (\delta(\gamma),a(\gamma))}(\gamma) = (\sem{\delta}(\gamma), \sem{a}(\gamma)).
  \end{alignat*}
\item With computation rules for every type and term former:
  \begin{alignat*}{1}
    & \sem{\lambda \gamma \mapsto \Pi(A(\gamma),\lambda a \mapsto B(\gamma,a))}(\gamma^\bullet) = \Pi^\bullet(\sem{A}(\gamma^\bullet), \lambda a^\bullet \mapsto \sem{B}(\gamma^\bullet)), \\
    & \sem{\lambda \gamma \mapsto \app(f(\gamma), a(\gamma))}(\gamma^\bullet) = \app^\bullet(\sem{f}(\gamma^\bullet), \sem{a}(\gamma^\bullet)), \\
    & \sem{\lambda \gamma \mapsto \lam(b(\gamma))}(\gamma^\bullet) = \lam^\bullet(\lambda a^\bullet \mapsto \sem{b}(\gamma^\bullet,a^\bullet)).
  \end{alignat*}
\end{itemize}

When $x$ is a closed type or term of $\CM$, we write $\sem{x}$ for the interpretation $\sem{x}(\star)$ of $x$ in the empty environment.
We may use underlined names to distinguish the variable of open terms. For instance, we may write $\sem{\app(\underline{f},\underline{a})}[\underline{f} \mapsto f', \underline{a} \mapsto a']$ instead of $\sem{\lambda (f,a) \mapsto \app(f,a)}(f',a')$.

\begin{remark}
  Let $\MS^\bullet$ be a displayed higher-order model over the first-order model $F^\ast(\Tele_\MS)$ in $\CPsh(\CC)$, for some functor $F : \CC \to \CS$.
  The displayed contextualization $\CScone_{\MS^\bullet}$ is a displayed first-order model over $F^\ast(\Tele_\MS)$.
  Then its externalization $1^{\ast}_{\CC}(\CScone_{\MS^\bullet})$ is an external displayed first-order model lying over $1^\ast_\CC(F^\ast(\Tele_\MS)) = 1^\ast_\CS(\Tele_\MS)$.
  Up to the isomorphism $1^\ast_\CS(\Tele_\MS) \cong \CS$, the externalized $\CScone$-contextualization $1^{\ast}_{\CC}(\CScone_{\MS^\bullet})$ coincides with gluing.
  Its underlying category is the comma category $(\CS \downarrow N_F)$, where $N_F : \CS \to \CPsh(\CC)$ is the nerve functor $\CS \xrightarrow{\yo} \CPsh(\CS) \xrightarrow{F^\ast} \CPsh(\CC)$.
  \lipicsEnd{}
\end{remark}


\section{Example: canonicity proof}\label{sec:canonicity}

As a first example of a relative induction principle and its application, we prove canonicity for $\Th$ extended with booleans (given by a type former $\BoolTy$ with constructors $\true$ and $\false$ and a dependent eliminator $\elimBool$ with two computation rules).

We use the induction principle relative to the functor $1_\CS : 1_\CCat \to \CS$ that selects the terminal object in the syntax $\CS$.
It turns out that proving this specific relative induction principle is trivial.

\begin{theorem}[Induction principle for $\MS$ relative to $1_\CS : 1_\CCat \to \CS$]\label{thm:relative_induction_principle_terminal} \mbox{}\\
  Let $\MS^{\bullet}$ be a displayed higher-order model over the initial model $\CS$, or equivalently over $1_\CS^\ast(\Tele_\MS)$.
  Then $\CScone_{\MS^{\bullet}}$ admits a section $\sem{-}$ over $\CS$.
\end{theorem}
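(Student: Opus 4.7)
The plan is to observe that this is essentially a direct application of the initiality of $\CS$ in $\CMod_\Th$. The entire content of the theorem is packaged into Construction~\ref{constr:internal_gluing_model}: starting from the displayed higher-order model $\MS^\bullet$ over $\CS$, the displayed contextualization produces a displayed \emph{first-order} model $\CScone_{\MS^\bullet}$ over $\CS$. Since $\CS$ is by definition the initial first-order model of $\Th$, sections of any displayed first-order model over it are obtained ``for free''.

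More concretely, I would proceed as follows. First, I would form the total first-order model $\int \CScone_{\MS^\bullet}$ associated to the displayed first-order model $\CScone_{\MS^\bullet}$; this is a first-order model of $\Th$ equipped with a projection morphism $\pi : \int \CScone_{\MS^\bullet} \to \CS$. By the initiality of $\CS$ in $\CMod_\Th$, there is a unique morphism $\sem{-} : \CS \to \int \CScone_{\MS^\bullet}$. The composite $\pi \circ \sem{-} : \CS \to \CS$ is then a model morphism out of $\CS$, which by initiality must equal the identity. Hence $\sem{-}$ is a section of $\pi$, and equivalently a section of the displayed model $\CScone_{\MS^\bullet}$ over $\CS$.

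The only thing to verify is that the equivalence stated parenthetically in the theorem (``over $\CS$, or equivalently over $1_\CS^\ast(\Tele_\MS)$'') is legitimate, since the notion of displayed higher-order model from Section~\ref{sec:displayed_ho_models} is defined over an arbitrary first-order model. This follows from the lemma recalled just before the theorem, namely that $1_\CS^\ast(\Tele_\MS)$ is the contextual core of $\CS$, and the fact that $\CS$ is itself contextual (as an initial model), so $1_\CS^\ast(\Tele_\MS) \cong \CS$ as first-order models.

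There is no real obstacle here: all the work has already been done in setting up the displayed contextualization so that its total space is genuinely a first-order model of $\Th$. The computation rules for $\sem{-}$ listed after the definition of section in Section~\ref{sec:displayed_ho_models} are automatically satisfied because they are precisely the equations that $\sem{-}$, being a morphism of first-order models into $\int \CScone_{\MS^\bullet}$, must preserve. The reason this specific relative induction principle is trivial, while later ones (\eg\ for $\CRen_\CS \to \CS_\Th$) require real work, is that here the ``relative'' functor $1_\CS : 1_\CCat \to \CS$ is already just picking out the terminal object, so no additional stability under a nontrivial category of shapes needs to be established.
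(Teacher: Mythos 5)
Your proposal is correct and matches the paper's argument, which is simply ``By initiality of~$\CS$''; your total-model construction $\int \CScone_{\MS^\bullet}$ with projection $\pi$ and the uniqueness argument forcing $\pi \circ \sem{-} = \id$ is just the standard unpacking of that one-line proof, and your justification of the identification $1_\CS^\ast(\Tele_\MS) \cong \CS$ via contextuality of the initial model is exactly what the paper relies on. Nothing is missing.
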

\begin{proof}
  By initiality of~$\CS$.
\end{proof}

We now construct the displayed higher-order model $\MS^{\bullet}$ over $1_\CS^\ast(\Tele_\MS)$ that will be used to prove canonicity.
A displayed type $A^{\bullet}$ over a closed type $A : \CS.\Ty(1_\CS)$ is a $\SSet$-valued logical predicate over the closed terms of type $A$:
\[ \Ty^{\bullet}(A) \triangleq \CS.\Tm(1_\CS,A) \to \SSet. \]
A displayed term $a^{\bullet}$ of type $A^\bullet$ over a closed term $a : \CS.\Tm(1_\CS,A)$ is an element of the logical predicate $A^{\bullet}$ evaluated at $a$:
\[ \Tm^{\bullet}(A^{\bullet}, a) \triangleq A^{\bullet}(a). \]
Given logical predicates $A^\bullet$ and $B^\bullet$, the logical predicate $\Pi^\bullet(A^\bullet,B^\bullet)$ expresses the fact that functions $f : \CS.\Tm(1_\CS,\Pi(A,B))$ should preserve the logical predicates.
\begin{alignat*}{3}
  & \Pi^{\bullet}(A^\bullet, B^\bullet) && \triangleq{ }
  && \lambda (f : \CS.\Tm(1_\CS,\Pi(A,B))) \mapsto (\forall a\ a^\bullet \to B^\bullet(a^\bullet, \app(f,a))), \\
  & \app^{\bullet}(f^\bullet, a^\bullet) && \triangleq{ }
  && f^\bullet(a^\bullet), \\
  & \lam^{\bullet}(b^\bullet) && \triangleq{ }
  && \lambda a^\bullet \mapsto b^\bullet(a^\bullet).
\end{alignat*}
It is easy to check that the displayed $\beta$- and $\eta$-rules hold.
The logical predicate $\BoolTy^\bullet : \CS.\Tm(1_\CS,\BoolTy) \to \SSet$ is defined as an inductive family with two constructors $\true^\bullet : \BoolTy^\bullet(\true)$ and $\false^\bullet : \BoolTy^\bullet(\false)$.
The displayed eliminator $\elimBool^\bullet$ is defined using the elimination principle of $\BoolTy^\bullet$ and the displayed $\beta$-laws hold.
This concludes the definition of all components of $\MS^\bullet$.

\begin{theorem}
  The initial model $\MS$ satisfies canonicity: any closed boolean term $b : \CS.\Tm(1_\CS,\BoolTy)$ is canonical, \ie equal to exactly one of $\true$ or $\false$. 
\end{theorem}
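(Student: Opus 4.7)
The plan is to invoke the relative induction principle just established, \cref{thm:relative_induction_principle_terminal}, with the displayed higher-order model $\MS^\bullet$ that has been constructed immediately above the theorem. This yields a section $\sem{-}$ of the displayed contextualization $\CScone_{\MS^\bullet}$ over $\CS$. Evaluating $\sem{-}$ at the closed boolean term $b : \CS.\Tm(1_\CS,\BoolTy)$ and unfolding the definition $\Tm^\bullet(A^\bullet,a) \triangleq A^\bullet(a)$, we extract an element $\sem{b} : \BoolTy^\bullet(b)$ of the logical predicate at $b$.

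For the existence half of canonicity, I would then perform a case split on $\sem{b}$ using the fact that the metatheoretic inductive family $\BoolTy^\bullet$ was defined with only the two constructors $\true^\bullet : \BoolTy^\bullet(\true)$ and $\false^\bullet : \BoolTy^\bullet(\false)$. Index-matching for inductive families forces the result: in the $\true^\bullet$ case the index of $\sem{b}$ must agree, giving $b = \true$; the $\false^\bullet$ case symmetrically gives $b = \false$. So every closed boolean is syntactically equal to $\true$ or to $\false$.

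For the uniqueness half — that $b$ equals \emph{exactly} one of $\true, \false$ — I would argue by contradiction: were $\true = \false$ to hold in the syntax, well-definedness of the section together with the computation rules $\sem{\true} = \true^\bullet$ and $\sem{\false} = \false^\bullet$ would force $\true^\bullet = \false^\bullet$ in $\BoolTy^\bullet$, contradicting the disjointness of distinct constructors of the metatheoretic inductive family. Hence the two clauses of the disjunction are mutually exclusive.

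The only non-trivial ingredient is the relative induction principle itself (trivially discharged here by initiality of $\CS$) and the verification that $\MS^\bullet$ is a well-defined displayed higher-order model — but that check has already been carried out. So there is no real obstacle remaining; the substantive bookkeeping (substitution, functoriality, naturality) has been absorbed into the scone construction of \cref{constr:internal_gluing_model}, and canonicity falls out as a one-line application of it to the logical predicate on booleans.
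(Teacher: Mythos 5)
Your proposal is correct and takes essentially the same route as the paper: it obtains the section from \cref{thm:relative_induction_principle_terminal}, uses the computation rules to get $\sem{b} : \BoolTy^\bullet(b)$ and concludes existence from the inductive definition of $\BoolTy^\bullet$, and derives exclusivity from $\sem{\true} = \true^\bullet \neq \false^\bullet = \sem{\false}$. The only difference is presentational: you spell out the index-matching case split and the contradiction argument slightly more explicitly than the paper does.
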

\begin{proof}
  By the relative induction principle~\cref{thm:relative_induction_principle_terminal}, the displayed higher-order model $\MS^{\bullet}$ admits a section $\sem{-}$.
  Now given a closed boolean term $b : \CS.\Tm(1_\CS,\BoolTy)$, we have $\sem{b} : \Tm^{\bullet}(\sem{\BoolTy}, b)$.
  By the computation rule of the section for $\BoolTy$, $\Tm^{\bullet}(\sem{\BoolTy}, b) = \BoolTy^\bullet(b)$.
  Thus, $\sem{b} : \BoolTy^\bullet(b)$ witnesses the fact that $b$ is canonical.

  Since $\sem{\true} = \true^\bullet$, $\sem{\false} = \false^\bullet$ and $\true^\bullet \neq \false^\bullet$, we know that $\true \neq \false$.
\end{proof}

Note that in the canonicity proof, we have not needed to evaluate the section $\sem{-}$ on non-closed types or terms.
Evaluating the section on open types and terms is usually only needed when encountering binders: the evaluation of the section on a closed binder depends on the evaluation of the section on an open type or term.
The following is an example of the computation of the evaluation of the section $\sem{-}$ on the application of the boolean negation function $\lam(\lambda b \mapsto \elimBool(\BoolTy,\false,\true,b))$ to $\true$.
\begin{alignat*}{1}
  & \sem{\app(\lam(\lambda b \mapsto \elimBool(\BoolTy,\false,\true,b)), \true)} \\
  & \quad = \sem{\lam(\lambda b \mapsto \elimBool(\BoolTy,\false,\true,b))}(\sem{\true}) \\
  & \quad = (\lambda b^\bullet \mapsto \sem{\elimBool(\BoolTy,\false,\true,\underline{b})}[\underline{b} \mapsto b^\bullet])(\true^\bullet) \\
  & \quad = \sem{\elimBool(\BoolTy,\false,\true,\underline{b})}[\underline{b} \mapsto \true^\bullet] \\
  & \quad = \elimBool^\bullet(\BoolTy^\bullet, \false^\bullet, \true^\bullet, \true^\bullet) \\
  & \quad = \false^\bullet.
\end{alignat*}


\section{Example: normalization proof}\label{sec:normalization}

In this section we prove normalization for the initial model $\CS$ of $\Th$ using an induction principle relative to $F : \CRen_\CS \to \CS$, where $\CRen_\CS$ is the category of renamings of $\CS$, \ie the category whose morphisms are the substitutions of $\CS$ that are built out of variables.
We use the alternative definition from~\cite{DBLP:journals/corr/abs-2102-11649} of $\CRen_\CS$ as the initial object in a category of first-order renaming algebras.

\subsection{The category of renamings}

\begin{definition}
  Let $\CC$ be a first-order model of $\Th$.
  A \defemph{higher-order renaming algebra} $\MC$ over $\CC$ consists of:
  \begin{alignat*}{3}
    & \MC.\Var && :{ }
    && \CC.\Ty(1_\CC) \to \UPsh, \\
    & \MC.\var && :{ }
    && (A : \CC.\Ty(1_\CC)) \to \MC.\Var(A) \to \CC.\Tm(1_\CC, A).
       \tag*{\lipicsEnd{}}
  \end{alignat*}
\end{definition}

\begin{definition}
  Let $\CD$ be a first-order model of $\Th$.
  A \defemph{first-order renaming algebra} over $\CD$ is a category $\CC$ with a terminal object along with a functor $F : \CC \to \CD$ that preserves the terminal object and with the structure of a global higher-order renaming algebra $\MC$ over $F^\ast(\Tele_\MD)$ such that $\MC.\Var$ is locally representable and $\MC.\var$ strictly preserves context extensions.
  \lipicsEnd{}
\end{definition}

Equivalently, a first-order renaming algebra over $\CD$ is a CwF $\CC$ together with a CwF morphism $F : \CC \to \CD$ whose action on types is bijective.
(There is a cofibrantly generated factorization system with such morphisms as its right class.
The renaming algebras are the objects in the left class.)
The category of first-order renaming algebras over $\CS$ is locally finitely presentable, and there is an initial first-order renaming algebra $\CRen_\CS$.
The category $\CRen_\CS$ is the category of renamings of $\CS$; in this section we write $F$ for the functor $F : \CRen_\CS \to \CS$.

\subsection{Relative induction principle}

We pose $\CS_{F} \triangleq F^\ast(\CTele_\MS)$; $\CS_{F}$ is an internal first-order model in $\CPsh(\CRen_\CS)$.

\begin{theorem}[Induction principle for $\MS$ relative to $F : \CRen_\CS \to \CS$]\label{thm:relative_induction_principle_renamings} \mbox{}\\
  Let $\MS^{\bullet}$ be a displayed higher-order model over $\CS_{F}$.
  Given the additional data of
  \begin{alignat*}{3}
    & \var^\bullet && :{ }
    && \forall (A : \CS_{F}.\Ty(1_{\CS_{F}}))\ (A^\bullet : \Ty^\bullet(A))\ (x : \Var(A)) \to \Tm^\bullet(A^\bullet, \var(x)),
  \end{alignat*}
  the displayed contextualization $\CScone_{\MS^\bullet}$ admits a section $\sem{-}$ that satisfies the additional computation rule
    $\sem{\var_A(x)} = \var^\bullet(\sem{A}, x).$
\end{theorem}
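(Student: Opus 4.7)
The plan is to mirror the proof of Theorem~\ref{thm:relative_induction_principle_terminal} but to replace the initiality of $\CS$ by the universal property of $\CRen_\CS$ as the initial first-order renaming algebra over $\CS$. Working internally to $\CPsh(\CRen_\CS)$, the displayed contextualization $\CScone_{\MS^\bullet}$ is a displayed first-order model over $\CS_F$, so its total space is a first-order model equipped with a CwF morphism back down to $\CS_F$. The additional datum $\var^\bullet$ is precisely the variable-interpretation needed to upgrade this projection to a morphism of first-order renaming algebras: take the variables of a displayed type to be those of its underlying type, and interpret $\var(x)$ in the total space by pairing $\var(x)$ in $\CS_F$ with $\var^\bullet(A^\bullet, x)$ in the displayed fibre. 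Local representability of $\Var$ and the remaining CwF-morphism conditions transfer verbatim from $\CS_F$.

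Since $F$ preserves terminal objects and $\CS$ is contextual, externalization along $1_{\CRen_\CS}$ identifies $\CS_F$ with $\CS$, so the internal renaming algebra just constructed externalizes to a first-order renaming algebra over $\CS$. Initiality of $\CRen_\CS$ then yields a unique morphism of renaming algebras over $\CS$ from $\CRen_\CS$ into it; by uniqueness this morphism splits the projection, and, by the equivalence of Proposition~\ref{prop:eqv_definitions_fom} between internal first-order models in $\CPsh(\CRen_\CS)$ and functors $\CRen_\CS \to \CMod_\Th^\op$, it corresponds to an internal section $\sem{-}$ of $\CScone_{\MS^\bullet}$. The supplementary computation rule $\sem{\var_A(x)} = \var^\bullet(\sem{A}, x)$ is exactly the preservation of $\var$ by this morphism of renaming algebras.

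The principal technical subtlety is the passage between the external universal property of $\CRen_\CS$ and the internal specification of the section. The cleanest route, which I would follow, is the generic section analysis of Section~\ref{sec:generic_section}: one assembles the sections of a displayed first-order model into an internal category of sections, shows that variable data like $\var^\bullet$ equips this category with a first-order renaming algebra structure, and then invokes initiality of $\CRen_\CS$ entirely inside $\CPsh(\CRen_\CS)$. The various computation rules of $\sem{-}$, including the one for $\var$, then fall out of the universal property rather than having to be verified piecemeal.
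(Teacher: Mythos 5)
Your final route --- assembling the sections of $\CScone_{\MS^\bullet}$ into a category over $\CRen_\CS$, equipping it with a first-order renaming algebra structure using $\var^\bullet$, and invoking initiality of $\CRen_\CS$ --- is exactly the paper's proof, and you correctly read off the computation rule for $\var$ from preservation of $\var$ by the resulting algebra morphism. However, the total-space detour in your first two paragraphs does not work as stated. Externalizing the total model of $\CScone_{\MS^\bullet}$ yields the glued category $(\CS \downarrow N_F)$ with its projection to $\CS$; a renaming-algebra morphism $\CRen_\CS \to (\CS \downarrow N_F)$ over $\CS$ supplies, for each object $\Gamma$ of $\CRen_\CS$, only a single object of the scone lying over $F(\Gamma)$. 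An internal section of $\CScone_{\MS^\bullet}$ over $\CS_F$ is strictly more data: for each $\Gamma$ it requires a section of the whole displayed first-order model $\CScone_{\MS^\bullet}(\Gamma)$ over the contextual slice $(\CS \sslash F\Gamma)$, naturally in $\Gamma$. This mismatch is precisely why the paper works with the category of sections $\CSect^{\op}_{\Th}[\CScone_{\MS^\bullet}]$, whose objects are pairs $(\Gamma,\sem{-}_\Gamma)$, rather than with the total category; note also that this category and the initiality of $\CRen_\CS$ are used externally, not ``entirely inside $\CPsh(\CRen_\CS)$'' as you suggest.

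Even on the correct route, you omit the technical heart of the argument. The renaming algebra structure on the category of sections is not obtained by transferring local representability ``verbatim'': one must define $\Var_0((\Gamma,\sem{-}_\Gamma),A)$ as the subfamily of those variables $x$ on which the section already computes as prescribed, \ie such that $\sem{\var(x)}_\Gamma = \var^\bullet(\sem{A}_\Gamma,x)$, and then prove that this restricted family remains locally representable with context extensions strictly preserved by the projection. That step (\cref{lem:generic_section_locally_representable}) hinges on the compatibility condition of \cref{def:internal_compatibility_condition}: $\CS_F(\Gamma.A)$ must be the free extension of $\CS_F(\Gamma)$ by a generic element of type $A$, which is supplied by the universal property of contextual slices (\cref{lem:contextual_slice_ext}, via \cref{lem:cwf_morphism_compatibility}). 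Similarly, the terminal object lifts to the category of sections only because $\CS_F(1)$ is the initial model (\cref{lem:generic_section_terminal}). Without these ingredients the appeal to initiality of $\CRen_\CS$ has nothing to latch onto.
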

\begin{proof}
  See~\cref{prf:relative_induction_principle_renamings}.
\end{proof}
Note that $\var_A(x)$ is always a closed term of $\CS_{F}$, thus $\sem{\var_A(x)}$ does not depend on any environment.


\subsection{Normal forms}

Neutrals and normal forms are defined internally to $\CPsh(\CRen_\CS)$, as inductive families
\[ \Ne, \Nf : \forall (A : \CS_{F}.\Ty(1_{\CS_{F}}))\ (a : \CS_{F}.\Tm(1_{\CS_{F}},A)) \to \SSet, \]
generated by the following constructors:
\begin{alignat*}{3}
  & \var^\ne && :{ }
  && (x : \Var(A)) \to \Ne_A(\var(x)), \\
  & \app^\ne && :{ }
  && \Ne_{\Pi(A,B)}(f) \to \Nf_A(a) \to \Ne_{B[a]}(\app(f,a)), \\
  & \lam^{\nf} && :{ }
  && ((a : \Var(A)) \to \Nf_{B[a]}(b[a])) \to \Nf_{\Pi(A,B)}(\lam(b)).
\end{alignat*}

The goal of normalization is to prove that every term has a unique normal form:
\[ \forall (A : \CS_{F}.\Ty(1_{\CS_{F}}))\ (a : \CS_{F}.\Tm(1_{\CS_{F}},A)) \to \isContr(\Nf_A(a)). \]

This is accomplished in two steps.
First a normalization function is obtained from the relative induction principle, witnessing the existence of normal forms.
Then the uniqueness of normal forms is derived from the stability of the normalization; a fact that is proven by mutual induction on neutrals and normal forms.

\subsection{Normalization displayed model}

We now construct the normalization displayed higher-order model $\MS^{\bullet}$ over $\CS_{F}$.

A displayed type $A^\bullet : \Ty^\bullet(A)$ over a type $A : \CS_{F}.\Ty(1_{\CS_{F}})$ is a triple $(A^\bullet_p, A^\bullet_u, A^\bullet_q)$ consisting of a logical \emph{predicate}
$A^\bullet_p : \CS_{F}.\Tm(1_{\CS_{F}},A) \to \SSet$,
  over the terms of type $A$, valued in the universe of sets;
an \emph{unquoting} (or reflection) function
$A^\bullet_u : (a : \CS_{F}.\Tm(1_{\CS_{F}},A)) \to \Ne_A(a) \to A^\bullet_p(a)$,
  witnessing the fact that any neutral term satisfies the logical predicate $A^\bullet_p$;
a \emph{quoting} (or reification) function
$A^\bullet_q : (a : \CS_{F}.\Tm(1_{\CS_{F}},A)) \to A^\bullet_p(a) \to \Nf_A(a)$,
  witnessing the fact a term satisfying the logical predicate $A^\bullet_p$ admits a normal form.
A displayed term $a^\bullet$ of type $A^\bullet$ over a term $a : \CS_{F}.\Tm(1_{\CS_{F}},A)$ is an element of $A^\bullet_p(a)$:
$\Tm^\bullet(A^\bullet,a) \triangleq A^\bullet_p(a)$.
The logical predicate for $\Pi$-types is defined in the same way as in the canonicity model.
\[ \Pi_p^\bullet(A^\bullet,B^\bullet)(f) \triangleq (\forall a\ a^\bullet \to B^\bullet_p(a^\bullet,\app(f,a))). \]
The unquoting function relies on the unquoting function of the codomain and the quoting function of the domain.
\[ \Pi_u^\bullet(A^\bullet,B^\bullet)(f^\ne) \triangleq \lambda a^\bullet \mapsto B^\bullet_u(a^\bullet, \app^\ne(f^\ne, A^\bullet_q(a^\bullet))). \]
The quoting function says that any element of a $\Pi$-type is a lambda, as implied by the $\eta$-rule.
It relies on the quoting function of the codomain and the unquoting function of the domain, and on the fact that every variable is neutral.
\[ \Pi_q^\bullet(A^\bullet,B^\bullet)(f^\bullet) \triangleq \lam^\nf(\lambda a \mapsto \mathsf{let}\ a^\bullet = A^\bullet_u(\var^\ne(a))\ \mathsf{in}\ B^\bullet_q(a^\bullet, f^\bullet(a^\bullet))). \]
This completes the definition of the displayed higher-order model $\MS^\bullet$.
It remains to check the last hypothesis of the relative induction principle:
\begin{alignat*}{3}
  & \var^\bullet && :{ }
  && \forall A\ A^\bullet\ (x : \Var(A)) \to \Tm^\bullet(A^\bullet, \var(x)), \\
  & \var^\bullet(A^\bullet,x) && \triangleq{ }
  && A^\bullet_u(\var^\ne(x)).
\end{alignat*}

By the relative induction principle~(\cref{thm:relative_induction_principle_renamings}), we obtain a section $\sem{-}$ of $\CScone_{\MS^\bullet}$.
We can then define the normalization function as follows:
\begin{alignat*}{3}
  & \norm && :{ } && \forall A\ (a : \CS_{F}.\Tm(1_{\CS_{F}},A)) \to \Nf_A(a), \\
  & \norm_A(a) && \triangleq{ } && \sem{A}_{q}(\sem{a}).
\end{alignat*}

\vspace{-0.6em}

\subsection{Stability of normalization and uniqueness of normal forms}

Finally, we show the uniqueness of normal forms following \cite{DBLP:phd/ethos/Kaposi17}: we prove that normalization is stable, that is every normal form for a term $a$ is equal to the normal form of $a$ obtained from the normalization function.
As the proof relies on most of the computation rules of the section $\sem{-}$, it is a good example of computations with a section.

\begin{lemma}[Stability]
  Given any normal form $a^\nf : \Nf_A(a)$, we have $a^\nf = \norm_A(a)$.
\end{lemma}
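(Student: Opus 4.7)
The plan is to prove stability by mutual induction on neutrals and normal forms, strengthening the statement for normal forms with a companion statement for neutrals. Concretely, I will simultaneously establish:
\begin{itemize}
\item (Normal forms) For every $a^\nf : \Nf_A(a)$, $a^\nf = \sem{A}_q(\sem{a})$.
\item (Neutrals) For every $a^\ne : \Ne_A(a)$, $\sem{a} = \sem{A}_u(a^\ne)$.
\end{itemize}
The neutral statement is needed because the $\app^\ne$ constructor forces one to reason about the interpretation of neutral terms, and the quoting function at $\Pi$-types in turn invokes unquoting on bound variables.

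The base case is $\var^\ne(x) : \Ne_A(\var(x))$. Here the extra computation rule from \cref{thm:relative_induction_principle_renamings} gives $\sem{\var_A(x)} = \var^\bullet(\sem{A}, x)$, and by definition of $\var^\bullet$ in the normalization displayed model this equals $\sem{A}_u(\var^\ne(x))$, as required. For the $\app^\ne(f^\ne, a^\nf)$ case, I would unfold $\sem{\app(f,a)}$ to $\app^\bullet(\sem{f}, \sem{a}) = \sem{f}(\sem{a})$ using the computation rule of the section on $\app$ and the definition of $\app^\bullet$ at $\Pi$-types. The inductive hypothesis on $f^\ne$ and the computation rule $\sem{\Pi(A,B)} = \Pi^\bullet(\sem{A},\sem{B})$ let me rewrite $\sem{f}(\sem{a})$ as $\Pi^\bullet_u(\sem{A},\sem{B})(f^\ne)(\sem{a}) = \sem{B}_u(\sem{a}, \app^\ne(f^\ne, \sem{A}_q(\sem{a})))$. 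Finally the inductive hypothesis on $a^\nf$ replaces $\sem{A}_q(\sem{a})$ with $a^\nf$, yielding the desired equation.

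For the $\lam^\nf(b^\nf)$ case at type $\Pi(A,B)$, I would compute $\sem{\Pi(A,B)}_q(\sem{\lam(b)})$ by unfolding: using the $\Pi$ computation rule of the section and the definition of $\Pi^\bullet_q$, this becomes $\lam^\nf\bigl(\lambda a.\ \sem{B}_q(\sem{A}_u(\var^\ne(a)),\ \sem{\lam(b)}(\sem{A}_u(\var^\ne(a))))\bigr)$. The displayed $\beta$-rule reduces $\sem{\lam(b)}(a^\bullet)$ to $\sem{b}(a^\bullet)$, and the inductive hypothesis applied to $b^\nf(a) : \Nf_{B[a]}(b[a])$, together with $\sem{\var_A(a)} = \sem{A}_u(\var^\ne(a))$, identifies $b^\nf(a)$ with $\sem{B}_q(\sem{A}_u(\var^\ne(a)), \sem{b}(\sem{A}_u(\var^\ne(a))))$. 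Closing under $\lam^\nf$ completes the case.

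The main obstacle is bookkeeping: each case requires applying several computation rules of the section in the right order (the $\Ty$-rule at $\Pi$ to rewrite $\sem{\Pi(A,B)}$, then the term rules for $\app$ or $\lam$, then the definitions of $\Pi^\bullet_u$ or $\Pi^\bullet_q$), and then invoking the inductive hypotheses at suitably substituted environments. Once booleans or any additional type formers are added, each one contributes one neutral and one normal-form case to the mutual induction, but each should reduce to the same pattern: unfold the computation rule, apply the defining equation of the unquote/quote, and conclude by the inductive hypothesis. Since the section's computation rules are definitional in the internal language, all these rewrites are syntactic and no non-trivial coherence obligations arise.
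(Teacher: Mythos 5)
Your proposal is correct and follows essentially the same route as the paper: the same strengthening to a mutual induction on neutrals and normal forms (with $\sem{a} = \sem{A}_u(a^\ne)$ as the companion statement), the same case analysis, and the same use of the section's computation rules and the definitions of $\var^\bullet$, $\Pi^\bullet_u$, $\Pi^\bullet_q$ in each case. The only step you leave implicit is the explicit appeal to naturality of $\sem{-}$ needed in the $\lam^\nf$ case to identify $\sem{B(\underline{a})}$ and $\sem{b(\underline{a})}$ evaluated at $\sem{\var(a)}$ with $\sem{B[\var(a)]}$ and $\sem{b[\var(a)]}$, which your ``suitably substituted environments'' remark covers.
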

\begin{proof}
  We prove the following two facts, by mutual induction on neutrals and normal forms: \\
  \begin{minipage}{0.5\textwidth}
  \[
  (a^\ne : \Ne_A(a)) \to \sem{a} = \sem{A}_u(a^\ne),
  \]
  \end{minipage}
  \begin{minipage}{0.5\textwidth}
  \[
  (a^\nf : \Nf_A(a)) \to a^\nf = \sem{A}_{q}(\sem{a}).
  \]
  \end{minipage} \\[1em]
  Each case involves some of the computation rules of $\sem{-}$.
  \begin{description}
  \item[Case $a^\ne = \var^\ne(A,x)$]
    \begin{alignat*}{1}
      & \sem{\var_A(x)} \\
      & \quad= \var^\bullet(\sem{A},x)
        \tag{by the computation rule for $\sem{\var(-)}$} \\
      & \quad= \sem{A}_u(a^\ne).
        \tag{by definition of $\var^\bullet$}
    \end{alignat*}

  \item[Case $a^\ne = \app^\ne(f^\ne,a^\nf)$]
    \begin{alignat*}{1}
      & \sem{\app(f,a)} \\
      & \quad= \app^\bullet(\sem{f},\sem{a})
      \tag{by the computation rule for $\sem{\app(-)}$} \\
      & \quad= \sem{f}(\sem{a})
        \tag{by definition of $\app^\bullet$} \\
      & \quad= \sem{\Pi(A,B)}_u(f^\ne,\sem{a})
        \tag{by the induction hypothesis for $f^\ne$} \\
      & \quad= \sem{B[a]}_u(\app^\ne(f^\ne,a^\nf)) .
        \tag{by definition of $\Pi^\bullet_u$ and the induction hypothesis for $a^\nf$}
    \end{alignat*}

  \item[Case $a^\nf = \lam^\nf(b^\nf)$]
    \begin{alignat*}{1}
      & \sem{\Pi(A,B)}_{q}(\sem{\lam(b)}) \\
      & \quad = \Pi_q^\bullet(\sem{A}, \lambda a^\bullet \mapsto \sem{B(\underline{a})}[\underline{a} \mapsto a^\bullet])(\lambda a^\bullet \mapsto \sem{b(\underline{a})}[\underline{a} \mapsto a^\bullet])
        \tag{by the computation rules for $\sem{\Pi(-)}$ and $\sem{\lam}$} \\
      & \quad = \lam^\nf(\lambda a \mapsto \mathsf{let}\ a^\bullet = \sem{A}_u(\var^\ne(a))\ \mathsf{in}\ {(\sem{B(\underline{a})}[\underline{a} \mapsto a^\bullet])}_q(\sem{b(\underline{a})}[\underline{a} \mapsto a^\bullet])
        \tag{by definition of $\Pi^\bullet_q$} \\
      & \quad = \lam^\nf(\lambda a \mapsto {(\sem{B(\underline{a})}[\underline{a} \mapsto \sem{\var(a)}])}_q(\sem{b(\underline{a})}[\underline{a} \mapsto \sem{\var(a)}]))
        \tag{by the computation rule for $\sem{\var(a)}$} \\
      & \quad = \lam^\nf(\lambda a \mapsto \sem{B[\var(a)]}_q(\sem{b[\var(a)]}))
        \tag{by the naturality of $\sem{-}$} \\
      & \quad = \lam^\nf(b^\nf).
        \tag*{(by the induction hypothesis for $b^\nf$) \qedhere{}}
    \end{alignat*}
  \end{description}
\end{proof}


\section{The category of sections of a displayed first-order model}\label{sec:generic_section}

The last tool that is needed for the proofs of relative induction principles is the \emph{category of sections} of an internal displayed higher-order model.
This replaces the use of a displayed inserter in our previous work~\cite{DBLP:journals/corr/abs-2102-11649}.

Let $\CM^\bullet$ be a global internal displayed first-order model over a first-order model $\CM$, internally to some presheaf category $\CPsh(\CC)$.
We see $\CM$ as a functor $\CM : \CC \to \CMod_\Th^\op$, as justified by~\cref{prop:eqv_definitions_fom}.
Write $\mathbf{DispMod}_\Th$ for the external category of a first-order model of $\Th$ with a displayed first-order model over it.
There is a forgetful functor $U : \mathbf{DispMod}_\Th \to \CMod_\Th$.
Since the notion of displayed first-order model is also essentially algebraic, we can also view $\CM^\bullet$ as a functor $\CC \to \mathbf{DispMod}_\Th^\op$ such that $U \circ \CM^\bullet = \CM$.
Similarly, writing $\CSect_\Th$ for the category of a displayed first-order model of $\Th$ with a section, a section of $\CM^\bullet$ can be identified with a functor $\sem{-} : \CC \to \CSect^\op_{\Th}$ such that $V \circ \sem{-} = \CM^\bullet$, where $V$ is the forgetful functor $\CSect_\Th \to \mathbf{DispMod}_\Th$.

\begin{definition}
  The category $\CSect_{\Th}^{\op}[\CM^\bullet]$ of \defemph{sections} of $\CM^\bullet$ is the pullback (in $\CCat$)
  \[ \begin{tikzcd}
      \CSect_{\Th}^{\op}[\CM^\bullet]
      \ar[rd, phantom, very near start, "\lrcorner"]
      \ar[d, "\pi_{0}"]
      \ar[r, "{\sem{-}_0}"]
      & \CSect_\Th^\op
      \ar[d, "V"]
      \\
      \CC
      \ar[r, "{\CM^\bullet}"]
      & \mathbf{DispMod}_\Th^\op
    \end{tikzcd}
  \]
\end{definition}

By the universal property of the pullback, the data of a section of $\CM^\bullet$ is equivalent to the data of a section of $\pi_0$ in $\CCat$.
The category $\CSect_{\Th}^{\op}[\CM^\bullet]$ is itself equipped with a section $\sem{-}_0$ of $\pi_0^\ast(\CM^\bullet)$, which is called the \emph{generic section} of $\CM^\bullet$.

In order to prove an induction principle such as~\cref{thm:relative_induction_principle_renamings}, we want to use the initiality of $\CC$ in some category to obtain section of $\pi_0$.
For example, when $\CC$ is the category of renamings, it suffices to equip $\CSect_{\Th}^{\op}[\CM^\bullet]$ with the structure of a renaming algebra that is preserved by $\pi_0$.

This typically involves lifting the terminal object and context extensions of $\CC$ to $\CSect_{\Th}^{\op}[\CM^\bullet]$.
Conditions for the lifting of these finite limits are given in~\cref{ssec:proofs_generic_section}; the initiality of the syntax is needed to lift the terminal object.


\appendix

\bibliography{main.bib}

\begin{thebibliography}{10}

\bibitem{DBLP:journals/pacmpl/0001OV18}
Andreas Abel, Joakim {\"{O}}hman, and Andrea Vezzosi.
\newblock Decidability of conversion for type theory in type theory.
\newblock {\em Proc. {ACM} Program. Lang.}, 2({POPL}):23:1--23:29, 2018.
\newblock \href {https://doi.org/10.1145/3158111} {\path{doi:10.1145/3158111}}.

\bibitem{DBLP:conf/ctcs/AltenkirchHS95}
Thorsten Altenkirch, Martin Hofmann, and Thomas Streicher.
\newblock Categorical reconstruction of a reduction free normalization proof.
\newblock In David~H. Pitt, David~E. Rydeheard, and Peter~T. Johnstone,
  editors, {\em Category Theory and Computer Science, 6th International
  Conference, {CTCS} '95, Cambridge, UK, August 7-11, 1995, Proceedings},
  volume 953 of {\em Lecture Notes in Computer Science}, pages 182--199.
  Springer, 1995.
\newblock \href {https://doi.org/10.1007/3-540-60164-3\_27}
  {\path{doi:10.1007/3-540-60164-3\_27}}.

\bibitem{DBLP:conf/lics/AltenkirchHS96}
Thorsten Altenkirch, Martin Hofmann, and Thomas Streicher.
\newblock Reduction-free normalisation for a polymorphic system.
\newblock In {\em Proceedings, 11th Annual {IEEE} Symposium on Logic in
  Computer Science, New Brunswick, New Jersey, USA, July 27-30, 1996}, pages
  98--106. {IEEE} Computer Society, 1996.
\newblock \href {https://doi.org/10.1109/LICS.1996.561309}
  {\path{doi:10.1109/LICS.1996.561309}}.

\bibitem{altenkirch_et_al:LIPIcs:2016:5972}
Thorsten Altenkirch and Ambrus Kaposi.
\newblock {Normalisation by Evaluation for Dependent Types}.
\newblock In Delia Kesner and Brigitte Pientka, editors, {\em 1st International
  Conference on Formal Structures for Computation and Deduction (FSCD 2016)},
  volume~52 of {\em Leibniz International Proceedings in Informatics (LIPIcs)},
  pages 6:1--6:16, Dagstuhl, Germany, 2016. Schloss Dagstuhl--Leibniz-Zentrum
  fuer Informatik.
\newblock URL: \url{http://drops.dagstuhl.de/opus/volltexte/2016/5972}, \href
  {https://doi.org/10.4230/LIPIcs.FSCD.2016.6}
  {\path{doi:10.4230/LIPIcs.FSCD.2016.6}}.

\bibitem{popl16}
Thorsten Altenkirch and Ambrus Kaposi.
\newblock Type theory in type theory using quotient inductive types.
\newblock In {\em Proceedings of the 43rd Annual ACM SIGPLAN-SIGACT Symposium
  on Principles of Programming Languages}, POPL '16, pages 18--29, New York,
  NY, USA, 2016. ACM.
\newblock URL: \url{http://doi.acm.org/10.1145/2837614.2837638}, \href
  {https://doi.org/10.1145/2837614.2837638}
  {\path{doi:10.1145/2837614.2837638}}.

\bibitem{HOTT-types}
Thorsten Altenkirch, Ambrus Kaposi, and Michael Shulman.
\newblock Towards higher observational type theory.
\newblock In Delia Kesner and Pierre-Marie Pédrot, editors, {\em 28th
  International Conference on Types for Proofs and Programs (TYPES 2022)}.
  University of Nantes, 2022.
\newblock URL:
  \url{https://types22.inria.fr/files/2022/06/TYPES_2022_paper_37.pdf}.

\bibitem{TwoLevelTypeTheoryAndApplications}
Danil Annenkov, Paolo Capriotti, Nicolai Kraus, and Christian Sattler.
\newblock Two-level type theory and applications.
\newblock {\em CoRR}, abs/1705.03307, 2019.
\newblock URL: \url{http://arxiv.org/abs/1705.03307}, \href
  {http://arxiv.org/abs/1705.03307} {\path{arXiv:1705.03307}}.

\bibitem{awodey_2018}
Steve Awodey.
\newblock Natural models of homotopy type theory.
\newblock {\em Mathematical Structures in Computer Science}, 28(2):241–286,
  2018.
\newblock \href {https://doi.org/10.1017/S0960129516000268}
  {\path{doi:10.1017/S0960129516000268}}.

\bibitem{bernardy12parametricity}
Jean-Philippe Bernardy, Patrik Jansson, and Ross Paterson.
\newblock Proofs for free --- parametricity for dependent types.
\newblock {\em Journal of Functional Programming}, 22(02):107--152, 2012.
\newblock URL:
  \url{https://publications.lib.chalmers.se/cpl/record/index.xsql?pubid=135303},
  \href {https://doi.org/10.1017/S0956796812000056}
  {\path{doi:10.1017/S0956796812000056}}.

\bibitem{DBLP:journals/corr/abs-2102-11649}
Rafa{\"{e}}l Bocquet, Ambrus Kaposi, and Christian Sattler.
\newblock Relative induction principles for type theories.
\newblock {\em CoRR}, abs/2102.11649, 2021.
\newblock URL: \url{https://arxiv.org/abs/2102.11649}, \href
  {http://arxiv.org/abs/2102.11649} {\path{arXiv:2102.11649}}.

\bibitem{CwFs:USD}
Simon Castellan, Pierre Clairambault, and Peter Dybjer.
\newblock Categories with families: Unityped, simply typed, and dependently
  typed.
\newblock {\em CoRR}, abs/1904.00827, 2019.
\newblock URL: \url{http://arxiv.org/abs/1904.00827}, \href
  {http://arxiv.org/abs/1904.00827} {\path{arXiv:1904.00827}}.

\bibitem{CoquandPresheafNote}
Thierry Coquand.
\newblock Presheaf model of type theory.
\newblock Available at \url{https://www.cse.chalmers.se/~coquand/presheaf.pdf},
  2013.

\bibitem{CoquandNormalization}
Thierry Coquand.
\newblock Canonicity and normalization for dependent type theory.
\newblock {\em Theor. Comput. Sci.}, 777:184--191, 2019.
\newblock \href {https://doi.org/10.1016/j.tcs.2019.01.015}
  {\path{doi:10.1016/j.tcs.2019.01.015}}.

\bibitem{coquand2021canonicity}
Thierry Coquand, Simon Huber, and Christian Sattler.
\newblock Canonicity and homotopy canonicity for cubical type theory, 2021.
\newblock \href {http://arxiv.org/abs/1902.06572} {\path{arXiv:1902.06572}}.

\bibitem{DBLP:journals/lmcs/CoquandHS22}
Thierry Coquand, Simon Huber, and Christian Sattler.
\newblock Canonicity and homotopy canonicity for cubical type theory.
\newblock {\em Log. Methods Comput. Sci.}, 18(1), 2022.
\newblock \href {https://doi.org/10.46298/lmcs-18(1:28)2022}
  {\path{doi:10.46298/lmcs-18(1:28)2022}}.

\bibitem{DBLP:conf/ppdp/Fiore02}
Marcelo~P. Fiore.
\newblock Semantic analysis of normalisation by evaluation for typed lambda
  calculus.
\newblock In {\em Proceedings of the 4th international {ACM} {SIGPLAN}
  conference on Principles and practice of declarative programming, October
  6-8, 2002, Pittsburgh, PA, {USA} (Affiliated with {PLI} 2002)}, pages 26--37.
  {ACM}, 2002.
\newblock \href {https://doi.org/10.1145/571157.571161}
  {\path{doi:10.1145/571157.571161}}.

\bibitem{DBLP:journals/corr/FioreM13}
Marcelo~P. Fiore and Ola Mahmoud.
\newblock Second-order algebraic theories.
\newblock {\em CoRR}, abs/1308.5409, 2013.
\newblock URL: \url{http://arxiv.org/abs/1308.5409}, \href
  {http://arxiv.org/abs/1308.5409} {\path{arXiv:1308.5409}}.

\bibitem{NormalizationMTT}
Daniel Gratzer.
\newblock Normalization for multimodal type theory.
\newblock In {\em Proceedings of the 37th Annual ACM/IEEE Symposium on Logic in
  Computer Science}, New York, NY, USA, 2022. Association for Computing
  Machinery.
\newblock URL:
  \url{https://jozefg.github.io/papers/2022-normalization-for-multimodal-type-theory-short.pdf},
  \href {https://doi.org/10.1145/3531130.3532398}
  {\path{doi:10.1145/3531130.3532398}}.

\bibitem{SynCategoriesSketchingAndAdequacy}
Daniel Gratzer and Jonathan Sterling.
\newblock Syntactic categories for dependent type theory: sketching and
  adequacy.
\newblock {\em CoRR}, abs/2012.10783, 2020.
\newblock URL: \url{https://arxiv.org/abs/2012.10783}, \href
  {http://arxiv.org/abs/2012.10783} {\path{arXiv:2012.10783}}.

\bibitem{DBLP:journals/corr/abs-2106-01484}
Robert Harper.
\newblock An equational logical framework for type theories.
\newblock {\em CoRR}, abs/2106.01484, 2021.
\newblock URL: \url{https://arxiv.org/abs/2106.01484}, \href
  {http://arxiv.org/abs/2106.01484} {\path{arXiv:2106.01484}}.

\bibitem{FrameworkDefiningLogics}
Robert Harper, Furio Honsell, and Gordon~D. Plotkin.
\newblock A framework for defining logics.
\newblock {\em J. {ACM}}, 40(1):143--184, 1993.
\newblock \href {https://doi.org/10.1145/138027.138060}
  {\path{doi:10.1145/138027.138060}}.

\bibitem{Hofmann97syntaxand}
Martin Hofmann.
\newblock Syntax and semantics of dependent types.
\newblock In {\em Semantics and Logics of Computation}, pages 79--130.
  Cambridge University Press, 1997.

\bibitem{SemAnalysisHOAS}
Martin Hofmann.
\newblock Semantical analysis of higher-order abstract syntax.
\newblock In {\em 14th Annual {IEEE} Symposium on Logic in Computer Science,
  Trento, Italy, July 2-5, 1999}, pages 204--213. {IEEE} Computer Society,
  1999.
\newblock \href {https://doi.org/10.1109/LICS.1999.782616}
  {\path{doi:10.1109/LICS.1999.782616}}.

\bibitem{CategoryTheoreticViewContextualTypes}
Jason Z.~S. Hu, Brigitte Pientka, and Ulrich Sch{\"{o}}pp.
\newblock A category theoretic view of contextual types: from simple types to
  dependent types.
\newblock {\em CoRR}, abs/2206.02831, 2022.
\newblock \href {http://arxiv.org/abs/2206.02831} {\path{arXiv:2206.02831}},
  \href {https://doi.org/10.48550/arXiv.2206.02831}
  {\path{doi:10.48550/arXiv.2206.02831}}.

\bibitem{Johnstone:592033}
Peter~T Johnstone.
\newblock {\em {Sketches of an elephant: a Topos theory compendium}}.
\newblock Oxford logic guides. Oxford Univ. Press, New York, NY, 2002.
\newblock URL: \url{https://cds.cern.ch/record/592033}.

\bibitem{DBLP:conf/tlca/JungT93}
Achim Jung and Jerzy Tiuryn.
\newblock A new characterization of lambda definability.
\newblock In Marc Bezem and Jan~Friso Groote, editors, {\em Typed Lambda
  Calculi and Applications, International Conference on Typed Lambda Calculi
  and Applications, {TLCA} '93, Utrecht, The Netherlands, March 16-18, 1993,
  Proceedings}, volume 664 of {\em Lecture Notes in Computer Science}, pages
  245--257. Springer, 1993.
\newblock \href {https://doi.org/10.1007/BFb0037110}
  {\path{doi:10.1007/BFb0037110}}.

\bibitem{DBLP:phd/ethos/Kaposi17}
Ambrus Kaposi.
\newblock {\em Type theory in a type theory with quotient inductive types}.
\newblock PhD thesis, University of Nottingham, {UK}, 2017.
\newblock URL:
  \url{https://ethos.bl.uk/OrderDetails.do?uin=uk.bl.ethos.713896}.

\bibitem{GluingTT}
Ambrus Kaposi, Simon Huber, and Christian Sattler.
\newblock Gluing for type theory.
\newblock In Herman Geuvers, editor, {\em 4th International Conference on
  Formal Structures for Computation and Deduction, {FSCD} 2019, June 24-30,
  2019, Dortmund, Germany}, volume 131 of {\em LIPIcs}, pages 25:1--25:19.
  Schloss Dagstuhl - Leibniz-Zentrum f{\"{u}}r Informatik, 2019.
\newblock \href {https://doi.org/10.4230/LIPIcs.FSCD.2019.25}
  {\path{doi:10.4230/LIPIcs.FSCD.2019.25}}.

\bibitem{HomotopyTheoryTypeTheories}
Krzysztof Kapulkin and Peter~LeFanu Lumsdaine.
\newblock The homotopy theory of type theories.
\newblock {\em Advances in Mathematics}, 337:1--38, 2018.
\newblock URL:
  \url{https://www.sciencedirect.com/science/article/pii/S0001870818303062},
  \href {https://doi.org/https://doi.org/10.1016/j.aim.2018.08.003}
  {\path{doi:https://doi.org/10.1016/j.aim.2018.08.003}}.

\bibitem{DBLP:conf/csl/Kovacs22}
Andr{\'{a}}s Kov{\'{a}}cs.
\newblock Generalized universe hierarchies and first-class universe levels.
\newblock In Florin Manea and Alex Simpson, editors, {\em 30th {EACSL} Annual
  Conference on Computer Science Logic, {CSL} 2022, February 14-19, 2022,
  G{\"{o}}ttingen, Germany (Virtual Conference)}, volume 216 of {\em LIPIcs},
  pages 28:1--28:17. Schloss Dagstuhl - Leibniz-Zentrum f{\"{u}}r Informatik,
  2022.
\newblock \href {https://doi.org/10.4230/LIPIcs.CSL.2022.28}
  {\path{doi:10.4230/LIPIcs.CSL.2022.28}}.

\bibitem{Twelf}
Frank Pfenning and Carsten Sch{\"{u}}rmann.
\newblock System description: Twelf - {A} meta-logical framework for deductive
  systems.
\newblock In Harald Ganzinger, editor, {\em Automated Deduction - CADE-16, 16th
  International Conference on Automated Deduction, Trento, Italy, July 7-10,
  1999, Proceedings}, volume 1632 of {\em Lecture Notes in Computer Science},
  pages 202--206. Springer, 1999.
\newblock \href {https://doi.org/10.1007/3-540-48660-7\_14}
  {\path{doi:10.1007/3-540-48660-7\_14}}.

\bibitem{pujet:hal-03857705}
Lo{\"i}c Pujet and Nicolas Tabareau.
\newblock {Impredicative Observational Equality}.
\newblock In {\em {POPL 2023 - 50th ACM SIGPLAN Symposium on Principles of
  Programming Languages}}, volume~7 of {\em Proceedings of the ACM on
  programming languages}, page~74, Boston, United States, January 2023.
\newblock URL: \url{https://hal.science/hal-03857705}, \href
  {https://doi.org/10.1145/3571739} {\path{doi:10.1145/3571739}}.

\bibitem{HOTT-shulman}
Michael Shulman.
\newblock Towards a third-generation {HOTT}.
\newblock Talk series at the Homotopy Type Theory at CMU seminar, 2022.

\bibitem{SterlingThesis}
Jonathan Sterling.
\newblock {\em First Steps in Synthetic {Tait} Computability: The Objective
  Metatheory of Cubical Type Theory}.
\newblock PhD thesis, Carnegie Mellon University, 2021.
\newblock Version 1.1, revised May 2022.
\newblock \href {https://doi.org/10.5281/zenodo.6990769}
  {\path{doi:10.5281/zenodo.6990769}}.

\bibitem{sterling:2022:naive}
Jonathan Sterling.
\newblock Na\"{i}ve logical relations in synthetic {Tait} computability.
\newblock Unpublished manuscript, June 2022.

\bibitem{NormalizationCTT}
Jonathan Sterling and Carlo Angiuli.
\newblock Normalization for cubical type theory.
\newblock {\em CoRR}, abs/2101.11479, 2021.
\newblock URL: \url{https://arxiv.org/abs/2101.11479}, \href
  {http://arxiv.org/abs/2101.11479} {\path{arXiv:2101.11479}}.

\bibitem{NormalizationByGluingFreeLT}
Jonathan Sterling and Bas Spitters.
\newblock Normalization by gluing for free {\(\lambda\)}-theories.
\newblock {\em CoRR}, abs/1809.08646, 2018.
\newblock URL: \url{http://arxiv.org/abs/1809.08646}, \href
  {http://arxiv.org/abs/1809.08646} {\path{arXiv:1809.08646}}.

\bibitem{GeneralFrameworkSemanticsTT}
Taichi Uemura.
\newblock A general framework for the semantics of type theory.
\newblock {\em CoRR}, abs/1904.04097, 2019.
\newblock URL: \url{http://arxiv.org/abs/1904.04097}, \href
  {http://arxiv.org/abs/1904.04097} {\path{arXiv:1904.04097}}.

\bibitem{weinberger-ahrens-buchholtz-north:2022:types}
Jonathan Weinberger, Benedikt Ahrens, Ulrik Buchholtz, and Paige North.
\newblock Synthetic {Tait} computability for simplicial type theory.
\newblock In {\em 28th International Conference on Types for Proofs and
  Programs (TYPES 2022)}, 2022.
\newblock URL:
  \url{https://types22.inria.fr/files/2022/06/TYPES_2022_paper_17.pdf}.

\end{thebibliography}

\section{Technical results}\label{sec:omitted}

\subsection{Local representability}

We recall the definition of the notion of locally representable dependent presheaf, which encodes the notion of context extension.
\begin{definition}\label{def:local_rep}
  Let $\CC$ be a category, $X$ be a presheaf over $\CC$ and $Y$ be a dependent presheaf over $X$.
  Then $Y$ is said to be \defemph{locally representable} if for every element $x : X(\Gamma)$, the presheaf
  \begin{alignat*}{1}
    & Y_{\mid x} : {(\CC/\Gamma)}^\op \to \CSet, \\
    & Y_{\mid x}(\Delta, \rho) \triangleq Y(\Delta, x[\rho])
  \end{alignat*}
  is representable.
  The representing object, consisting of an extended context and a projection map, is written $(\Gamma.Y[x], \bm{p}_x)$ and the generic element is written $\bm{q}_x : Y(\Gamma.Y[x], x[\bm{p}_x])$.

  Given any object $\Delta \in \CC$, map $\rho : \Delta \to \Gamma$ and element $y : Y(\Delta, x[\rho])$, we write $\angles{\rho,a}$ for the unique morphism such that $\bm{p}_x \circ \angles{\rho,y} = \rho$ and $\bm{q}_x[\angles{\rho,y}] = y$.
  \lipicsEnd{}
\end{definition}

\subsection{Characterization of the telescopic contextualization}

We define the contextual slices of a first-order model (which are called fibrant slices in~\cite{HomotopyTheoryTypeTheories}).
\begin{definition}[Contextual slice]\label{def:contextual_slice}
  Let $\CC$ be a first-order model of $\Th$.
  Given $\Gamma \in \CC$, the contextual slice $(\CC \sslash \Gamma)$ is the contextual first-order model given by:
  \begin{itemize}
  \item Objects of $(\CC \sslash \Gamma)$ are telescopes (iterated context extensions) over $\Gamma$.
  \item Morphisms from $\Delta_1$ to $\Delta_2$ are morphisms from $\Gamma.\Delta_1$ to $\Gamma.\Delta_2$ in $(\CC / \Gamma)$.
  \item The rest of the structure is inherited from $\CC$ along the projection
    \[ (\CC \sslash \Gamma) \ni \Delta \mapsto \Gamma.\Delta \in \CC.
      \tag*{\lipicsEnd{}}
    \]
  \end{itemize}
\end{definition}

The contextual slice is functorial in both $\CC$ and (contravariantly) $\Gamma$:
\begin{itemize}
\item 
  For any $f : \Delta \to \Gamma$, there is a pullback morphism $f^\ast : (\CC \sslash \Gamma) \to (\CC \sslash \Delta)$.
  Its actions on objects, substitutions, types and terms are all given by substitution along $f$.
\item
  For any $F : \CC \to \CD$, there is a morphism $(F \sslash \Gamma) : (\CC \sslash \Gamma) \to (\CD \sslash F(\Gamma))$.
  Its actions on objects, substitutions, types and terms are given by the actions of $F$ on telescopes, substitutions, types and terms.
\item
  The following diagrams commute (for any $F : \CC \to \CD$ and $f : \Delta \to \Gamma$):
  \[ \begin{tikzcd}
      (\CC \sslash \Gamma)
      \ar[r, "(F \sslash \Gamma)"]
      \ar[d, "f^\ast"]
      & (\CD \sslash F(\Gamma))
      \ar[d, "f^\ast"]
      \\
      (\CC \sslash \Delta)
      \ar[r, "(F \sslash \Delta)"]
      & (\CD \sslash F(\Delta)) \rlap{\ .}
    \end{tikzcd} \]
\item For any $f : \Theta \to \Gamma$ and object $\Delta$ of $(\CC \sslash \Gamma)$, we have
  $(f^\ast \sslash \Delta) = \angles{\bm{p}_\Delta^\ast(f),\bm{q}_\Delta}^\ast$ as a morphism from $(\CC \sslash \Gamma.\Delta)$ to $(\CC \sslash \Theta.f^\ast(\Delta))$.
  Here $\angles{f \circ \bm{p}_{f^\ast(\Delta)},\bm{q}_{f^\ast(\Delta)}}$ is a morphism from $\Theta.f^\ast(\Delta)$ to $\Gamma.\Delta$.
\end{itemize}

\begin{lemma}\label{lem:contextual_slice}
  Let $\CC$ be a first-order model of $\Th$.
  Then the telescopic contextualization $\CTele_\MC$ corresponds the contextual slice functor
  \[ (\CC \sslash -) : \CC \to \CMod_\Th^\op.
    \tag*{\lipicsEnd{}}
  \]
\end{lemma}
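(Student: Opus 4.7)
The plan is to unfold $\CTele_\MC$, viewed via Proposition~\ref{prop:eqv_definitions_fom} as a functor $\CC \to \CMod_\Th^\op$, and identify its value at $\Gamma$ with $(\CC \sslash \Gamma)$ pointwise. The crucial input is local representability of $\MC.\Tm$, which holds because $\CC$ is a first-order model and $\MC$ is its internalization: this is what will translate the inner dependent-sum context extensions of $\CSet_\MC$ into honest context extensions in $\CC$.

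First I would establish the object correspondence. By construction, an object of $\CTele_\MC$ internally is an iterated extension of the empty context inside the inner CwF structure of $\CSet_\MC$, that is, a telescope $\Unit.A_1.\cdots.A_n$ formed using the dependent-sum context extensions of $\CSet_\MC$. Evaluated at stage $\Gamma$, using $\MC.\Ty(\Gamma) = \CC.\Ty(\Gamma)$ and the local representability of $\MC.\Tm$, such a telescope unfolds inductively: $A_1$ is a type in $\CC.\Ty(\Gamma)$, and the next layer is indexed over $\Unit.A_1$ at stage $\Gamma$, which by local representability of $\MC.\Tm$ is represented by the context extension $\Gamma.A_1$ of $\CC$, so $A_2$ amounts to a type in $\CC.\Ty(\Gamma.A_1)$, and so on. This inductive unfolding gives a bijection between objects of $\CTele_\MC$ at stage $\Gamma$ and the iterated context extensions of $\Gamma$ that form the objects of $(\CC \sslash \Gamma)$ in Definition~\ref{def:contextual_slice}.

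Next I would extend the correspondence to the remaining CwF structure. A morphism between two inner telescopes over $\Gamma$ is a tuple of inner-CwF terms, which via the same unfolding becomes a morphism in $(\CC / \Gamma)$ between the two telescopic extensions of $\Gamma$, matching morphisms of $(\CC \sslash \Gamma)$. A type of $\CSet_\MC$ over an inner telescope at stage $\Gamma$ unfolds, again by Yoneda plus the stagewise computation of $\MC.\Ty$, to a type of $\CC$ over the corresponding extended context; terms unfold similarly. These match the definition of $(\CC \sslash \Gamma)$ as inheriting its CwF structure from $\CC$ along $\Delta \mapsto \Gamma.\Delta$. The $\Pi$-formers are defined pointwise from $\MC$ in $\CTele_\MC$ and inherited pointwise from $\CC$ in $(\CC \sslash \Gamma)$, so they transport along the bijection.

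Finally I would check functoriality in $\Gamma$. Under Proposition~\ref{prop:eqv_definitions_fom}, a morphism $f : \Delta \to \Gamma$ acts on $\CTele_\MC$ by restricting sections from stage $\Gamma$ to stage $\Delta$; on inner telescopes this is substitution, and under the above identification this is precisely the action $f^\ast : (\CC \sslash \Gamma) \to (\CC \sslash \Delta)$ of the contextual slice functor. The main obstacle is the inductive unfolding in the object step: layer by layer, one must turn inner dependent sums $\Unit.A_1.\cdots$ in $\CSet_\MC$ at stage $\Gamma$ into honest iterated extensions in $\CC$ using local representability of $\MC.\Tm$, and verify that this translation is natural both in $\Gamma$ and in the telescope. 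Once this bookkeeping is done, the remaining agreement on morphisms, types, terms and $\Pi$-structure is essentially a matter of unfolding definitions.
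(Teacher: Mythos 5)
Your proposal is correct and matches the paper's approach: the paper's own proof is literally ``Immediate from unfolding the definitions,'' and what you have written is that unfolding carried out explicitly, with the right key observation that local representability of $\MC.\Tm$ (available because $\CC$ is a first-order model) is what converts the inner dependent-sum context extensions of $\CSet_\MC$ at stage $\Gamma$ into the honest iterated context extensions of $\Gamma$ in $\CC$. The remaining checks on morphisms, types, terms, $\Pi$-structure and naturality in $\Gamma$ are exactly the definition-chasing the paper elides.
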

\begin{proof}
  Immediate from unfolding the definitions.
\end{proof}

\begin{corollary}\label{lem:contextual_slice_terminal}
  The externalization $1_\CC^\ast(\CTele_\MC)$ is the contextual core of $\CC$.
\end{corollary}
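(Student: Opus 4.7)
The plan is to combine Lemma~\ref{lem:contextual_slice} with the definition of externalization, so that the corollary becomes essentially a matter of unfolding. By Lemma~\ref{lem:contextual_slice}, the internal first-order model $\CTele_\MC$ corresponds, under the equivalence of \cref{prop:eqv_definitions_fom}, to the contextual slice functor $(\CC \sslash -) : \CC \to \CMod_\Th^\op$. The externalization $1_\CC^\ast(\CTele_\MC)$ is, by definition of restriction along $1_\CC : 1_\CCat \to \CC$, the precomposition of this functor with the selection of the terminal object, which picks out the single value $(\CC \sslash 1_\CC) \in \CMod_\Th^\op$.

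It then suffices to identify $(\CC \sslash 1_\CC)$ with the contextual core $\cxl(\CC)$. From Definition~\ref{def:contextual_slice}, the objects of $(\CC \sslash 1_\CC)$ are telescopes over $1_\CC$, i.e.\ iterated context extensions starting from the empty context, which is precisely the definition of the objects of $\cxl(\CC)$. Morphisms in $(\CC \sslash 1_\CC)$ are morphisms in $\CC/1_\CC$ between such iterated extensions; since $1_\CC$ is terminal, the slice $\CC/1_\CC$ is isomorphic to $\CC$, so these are just the morphisms of $\CC$ between the underlying objects, which again matches $\cxl(\CC)$. The CwF and $\Pi$-type structure on $(\CC \sslash 1_\CC)$ is inherited from $\CC$ via the projection functor, which coincides with how the structure on $\cxl(\CC)$ is defined (the inclusion $\cxl(\CC) \hookrightarrow \CC$ is a model morphism bijective on types and terms).

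No genuine obstacle is expected: once Lemma~\ref{lem:contextual_slice} is available, the corollary is a direct computation. The only point requiring a little care is to check that the isomorphism $(\CC \sslash 1_\CC) \cong \cxl(\CC)$ is one of first-order models and not merely of underlying categories, which follows because both inherit all type-theoretic operations from $\CC$ via the same telescope-projection. I would therefore write the proof as a single paragraph invoking Lemma~\ref{lem:contextual_slice}, unfolding externalization at $1_\CC$, and noting the coincidence of the resulting first-order model with $\cxl(\CC)$ by comparing the two definitions.
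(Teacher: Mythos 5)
Your proposal is correct and follows the paper's own argument: the paper likewise proves the corollary by invoking Lemma~\ref{lem:contextual_slice} and observing that externalization at $1_\CC$ yields the contextual slice $(\CC \sslash 1_\CC)$, which is the contextual core of $\CC$. Your additional unfolding of why $(\CC \sslash 1_\CC)$ coincides with $\cxl(\CC)$ as a first-order model is just a more explicit version of the same step.
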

\begin{proof}
  By~\cref{lem:contextual_slice}, $1_\CC^\ast(\CTele_\MC)$ is the contextual slice $(\CC \sslash 1_\CC)$, the contextual core of $\CC$.
\end{proof}

\begin{lemma}\label{lem:contextual_slice_ext_closed}
  Let $\CC$ be a contextual first-order model and $A : \CC.\Ty(1_\CC)$ be a closed type.
  Then the contextual slice $(\CC \sslash A)$ satisfies the following universal property:
  for every model $\CE$, morphism $F : \CC \to \CE$ and element $a : \CE.\Tm(F(A))$, there is a unique morphism $\widetilde{F} : (\CC \sslash 1_\CC.A) \to \CE$ such that $\widetilde{F} \circ \bm{p}_A^{\ast} = F$ and $\widetilde{F}(\bm{q}_A) = a$.

  In other words, $(\CC \sslash 1_\CC.A)$ is the free extension of $\CC$ by a generic element $\bm{q}_A$ of type $A$.
\end{lemma}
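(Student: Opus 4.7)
The plan is to construct $\widetilde{F}$ from the already developed functoriality of the contextual slice, and then extract uniqueness from the fact that $(\CC \sslash 1_\CC.A)$ is ``freely generated'' over $\CC$ by the element $\bm{q}_A$.

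For the construction, observe that $F$ preserves the terminal object and the context extension $1_\CC.A$, so we may identify $F(1_\CC.A)$ with $1_\CE.F(A)$. The element $a : \CE.\Tm(F(A))$ then corresponds to a section $\angles{!_{1_\CE}, a} : 1_\CE \to 1_\CE.F(A)$ of $\bm{p}_{F(A)}$, and I would define $\widetilde{F}$ as the composite
\[
  (\CC \sslash 1_\CC.A)
  \xrightarrow{(F \sslash 1_\CC.A)}
  (\CE \sslash 1_\CE.F(A))
  \xrightarrow{\angles{!_{1_\CE}, a}^{\ast}}
  (\CE \sslash 1_\CE)
  \hookrightarrow \CE,
\]
whose last arrow is the inclusion of the contextual core. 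The equation $\widetilde{F} \circ \bm{p}_A^{\ast} = F$ then follows by combining three facts recalled earlier: the naturality square $(F \sslash 1_\CC.A) \circ \bm{p}_A^{\ast} = \bm{p}_{F(A)}^{\ast} \circ (F \sslash 1_\CC)$, the section identity $\angles{!_{1_\CE}, a}^{\ast} \circ \bm{p}_{F(A)}^{\ast} = \id$, and \cref{lem:contextual_slice_terminal} which identifies $(\CC \sslash 1_\CC)$ with $\CC$ by contextuality, sending $(F \sslash 1_\CC)$ to $F$. The equation $\widetilde{F}(\bm{q}_A) = a$ is a direct unfolding: $(F \sslash 1_\CC.A)$ sends $\bm{q}_A$ to the generic element $\bm{q}_{F(A)}$, which the pullback along $\angles{!_{1_\CE}, a}$ substitutes by $a$.

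The main obstacle is uniqueness. The key structural observation is that every datum (object, substitution, type, term) of $(\CC \sslash 1_\CC.A)$ is determined by the data of $\bm{p}_A^{\ast}(\CC)$ together with the generic element $\bm{q}_A$: an object is a telescope built by iterated extension from $1_\CC.A$, a type of the slice over a telescope $\Delta$ is by definition a type of $\CC$ over $1_\CC.A.\Delta$, and similarly for terms. Any candidate $F'$ satisfying the two equations must therefore agree with $\widetilde{F}$ on $\bm{p}_A^{\ast}(\CC)$ and on $\bm{q}_A$, and because $F'$ is required to be a strict first-order model morphism (hence preserves the terminal object, context extension, and substitution on the nose), its value on every piece of data is forced. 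The formal verification is a routine simultaneous induction on telescope length for objects, and on the structural description of types and terms in the slice.
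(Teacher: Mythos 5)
Your existence half is exactly the paper's construction: the paper also defines the canonical morphism as the composite $(\CC \sslash 1_\CC.A) \xrightarrow{(F \sslash 1_\CC.A)} (\CE \sslash 1_\CE.F(A)) \xrightarrow{\angles{a}^\ast} (\CE \sslash 1_\CE) \to \CE$ and verifies the two equations with the same naturality square and section identity, so that part is fine.

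The gap is in uniqueness. You assert that any $F'$ agreeing with $F$ on the image of $\bm{p}_A^\ast$ and sending $\bm{q}_A$ to $a$ is forced everywhere, to be checked by ``simultaneous induction on telescope length for objects, and on the structural description of types and terms in the slice''. But $\CC$ is an \emph{arbitrary} contextual first-order model, not the initial one: contextuality gives an inductive description of its \emph{objects} (telescopes) only, and there is no structural induction available on its types and terms. Moreover, a general type or term of $(\CC \sslash 1_\CC.A)$ over a telescope $\Delta$ — i.e.\ a type or term of $\CC$ over $1_\CC.A.\Delta$ — is typically \emph{not} in the image of $\bm{p}_A^\ast$, nor generated from it by type formers, so ``agreement on $\bm{p}_A^\ast(\CC)$ and on $\bm{q}_A$ plus strictness'' does not by itself determine $F'$ on it. What is missing is the key identity that weakening by a fresh copy of $A$ followed by substituting the generic element is the identity, $\angles{\bm{q}_A}^\ast \circ \bm{p}_{A'}^\ast = \id$ (with $A' = \bm{p}_A^\ast(A)$): combined with the naturality square for the candidate $F'$ itself, it shows that every datum of $(\CC \sslash 1_\CC.A)$ is the substitution, along a morphism built from $\bm{q}_A$ and structural maps, of the $\bm{p}_A^\ast$-image of the corresponding datum of $\CC$, and hence that $F'$ must equal the canonical composite. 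This is precisely how the paper's proof proceeds: it factors an arbitrary $\tilde F$ through $(\tilde F \sslash (1_\CC.A))$, inserts $\angles{\bm{q}_A}^\ast \circ \bm{p}_{A'}^\ast = \id$, and uses functoriality of $(- \sslash 1_\CC.A)$ to rewrite the factorization as $\angles{a}^\ast \circ (F \sslash 1_\CC.A)$, which depends only on $F$ and $a$. Once you state and use that identity, your forcing argument goes through (and no induction on types or terms is needed at all, only preservation of substitution and context extension by $F'$); as written, the uniqueness step does not.
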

\begin{proof}
  The contextuality of $\CC$ implies that $\CC \cong (\CC \sslash 1_\CC)$ and thus we have a weakening morphism $\bm{p}_A^\ast : \CC \to (\CC \sslash 1_\CC.A)$.
  We write $A' : \CC.\Ty(1_\CC.A)$ for the weakening of $A$, \ie $A' = \bm{p}_A^\ast(A)$.
  
  Take a morphism $F : \CC \to \CE$ and an element $a : \CE.\Tm(F(A))$.

  Let $\tilde{F} : (\CC \sslash 1_\CC.A) \to \CE$ be any morphism such that $\tilde{F} \circ \bm{p}_A^{\ast} = F$ and $\tilde{F}(\bm{q}_A) = a$.
  Since $(\CC \sslash 1_\CC.A)$ is contextual, $\tilde{F}$ factors through $(\tilde{F} \sslash (1_\CC.A)) : (\CC \sslash 1_\CC.A) \to (\CE \sslash 1_\CE)$.

  The following diagram commutes:
  \[ \begin{tikzcd}[column sep=50pt]
      (\CC \sslash 1_\CC.A)
      \ar[d, "\bm{p}_{A'}^{\ast}"]
      \ar[r, "(\tilde{F} \sslash (1_\CC.A))"]
      \ar[dd, equal, bend right=90]
      &
      (\CE \sslash 1_\CE)
      \ar[d, "{\bm{p}_{\tilde{F}(A')}^{\ast}}"]
      \ar[dd, equal, bend left=90]
      \\
      (\CC \sslash 1_\CC.A.A')
      \ar[r, "{(\tilde{F} \sslash (1_\CC.A).A')}"]
      \ar[d, "\angles{\bm{q}_{A}}^{\ast}"]
      &
      (\CE \sslash 1_\CE.\tilde{F}(A'))
      \ar[d, "\angles{\tilde{F}(\bm{q}_A)}^\ast"]
      \\
      (\CC \sslash 1_\CC.A)
      \ar[r, "(\tilde{F} \sslash (1_\CC.A))"]
      &
      (\CE \sslash 1_\CE) \rlap{\ .}
    \end{tikzcd} \]

  By assumptions, we have $\tilde{F}(A') = F(A)$ and $\tilde{F}(\bm{q}_A) = a$.
  Thus, $\tilde{F}$ factors as the composition
  \[ (\CC \sslash 1_\CC.A) \xrightarrow{\bm{p}_{A'}^\ast} (\CC \sslash 1_\CC.A.A') \xrightarrow{(\tilde{F} \sslash (1_\CC.A).A')} (\CE \sslash 1_\CE.F(A)) \xrightarrow{\angles{a}^\ast} (\CE \sslash 1_\CE) \xrightarrow{} \CE. \]

  By functoriality of $(- \sslash 1_\CC.A)$, we have $(F \sslash 1_\CC.A) = (\tilde{F} \sslash (1_\CC.A).A') \circ (\bm{p}_A^\ast \sslash 1_\CC.A)$.
  We can simplify $(\bm{p}_A^\ast \sslash 1_\CC.A) = \angles{\bm{p}_A \circ \bm{p}_{A'}, \bm{q}_{A'}}^\ast$.
  Finally, we see that $\angles{\bm{q}_A}^\ast \circ \angles{\bm{p}_A \circ \bm{p}_{A'}, \bm{q}_{A'}}^\ast = \id = \angles{\bm{q}_A}^\ast \circ \bm{p}_{A'}^{\ast}$.
 
  Thus, $(\tilde{F} \sslash (1_\CC.A).A') \circ \bm{p}_{A'}^\ast = F_\ast(1_\CC.A)$ and we can simplify the factorization to
  \[ (\CC \sslash 1_\CC.A) \xrightarrow{F_\ast(1_\CC.A)} (\CE \sslash 1_\CE.F(A)) \xrightarrow{\angles{a}^\ast} (\CE \sslash 1_\CE) \xrightarrow{} \CE. \]
  
  As this composition does not depend on $\tilde{F}$ and satisfies the required equations, this shows the existence and uniqueness of $\tilde{F}$.
 \end{proof}

\begin{lemma}\label{lem:contextual_slice_ext}
  Given any $\Gamma \in \CC$ and $A : \CC.\Ty(\Gamma)$, then $(\CC \sslash \Gamma.A)$ satisfies the following universal property:
  for every model $\CE$, morphism $F : (\CC \sslash \Gamma) \to \CE$ and element $a : \CE.\Tm(F(A))$, there is a unique morphism $\widetilde{F} : (\CC \sslash \Gamma.A) \to \CE$ such that $\widetilde{F} \circ \bm{p}_A^\ast = F$ and $\widetilde{F}(\bm{q}_A) = a$.

  In other words, $(\CC \sslash \Gamma.A)$ is the free extension of $(\CC \sslash \Gamma)$ by a generic element $\bm{q}_A$ of type $A$.
\end{lemma}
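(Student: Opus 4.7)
The plan is to reduce the statement to the closed-type case already handled in \cref{lem:contextual_slice_ext_closed}, applied to the contextual first-order model $(\CC \sslash \Gamma)$ in place of $\CC$. This turns the assumption ``$\CC$ contextual, $A$ closed'' of the previous lemma, which does not hold here in general (neither $\CC$ need be contextual, nor $A$ closed), into a fact about the relativized situation, since $(\CC \sslash \Gamma)$ is always contextual by construction and $A$ becomes closed after relativizing.

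First I would identify the type $A : \CC.\Ty(\Gamma)$ with a closed type $A' : (\CC \sslash \Gamma).\Ty(1_{(\CC \sslash \Gamma)})$. This is immediate from \cref{def:contextual_slice}: the terminal object $1_{(\CC \sslash \Gamma)}$ is the empty telescope over $\Gamma$, and the $\Ty$- and $\Tm$-presheaves of $(\CC \sslash \Gamma)$ are transported from those of $\CC$ along $\Delta \mapsto \Gamma.\Delta$. Consequently, for any $F : (\CC \sslash \Gamma) \to \CE$ one has $F(A') = F(A)$, so the hypothesis $a : \CE.\Tm(F(A))$ is literally an element of $F(A')$ in the sense of \cref{lem:contextual_slice_ext_closed}.

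Second I would exhibit a canonical isomorphism
\[
  (\CC \sslash \Gamma.A) \;\cong\; \bigl((\CC \sslash \Gamma) \sslash 1_{(\CC \sslash \Gamma)}.A'\bigr)
\]
of first-order models, under which $\bm{p}_A^\ast$ corresponds to $\bm{p}_{A'}^\ast$ and $\bm{q}_A$ corresponds to $\bm{q}_{A'}$. On objects this holds by inspection: a telescope in $(\CC \sslash \Gamma)$ over $1_{(\CC \sslash \Gamma)}.A'$ unfolds to a telescope in $\CC$ over $\Gamma.A$. Morphisms, types, terms, and all type-theoretic operations are inherited on both sides by projecting to $\CC$, so the identification is strict. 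Once this is in place, invoking \cref{lem:contextual_slice_ext_closed} with inputs $(\CC \sslash \Gamma)$, $A'$, $F$, and $a$ produces a unique $\widetilde{F}$ satisfying the required equations, which transports across the isomorphism to give the desired morphism out of $(\CC \sslash \Gamma.A)$.

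The step I expect to require the most care is verifying that the displayed isomorphism is really one of first-order models, rather than merely of underlying categories, because both the context-extension structure and the reindexing functors $(-)^\ast$ from \cref{def:contextual_slice} are defined by unfolding telescopes and it is easy to get the bookkeeping slightly wrong. Concretely, one has to check that $\bm{p}_{A'}^\ast$ and $\bm{q}_{A'}$ on the right, defined using the CwF structure of $(\CC \sslash \Gamma)$, agree under the identification with $\bm{p}_A^\ast$ and $\bm{q}_A$ on the left, which are defined using the CwF structure of $\CC$. All of this follows directly from the fact that both structures are lifted from the same projection $\Delta \mapsto \Gamma.\Delta$, but this must be recorded carefully to make the transport of $\widetilde{F}$ well-defined.
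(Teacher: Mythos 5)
Your proposal is correct and follows essentially the same route as the paper's proof: the paper likewise establishes the identification $(\CC \sslash \Gamma.A) \cong ((\CC \sslash \Gamma) \sslash 1_{(\CC\sslash\Gamma)}.A)$ and then appeals to \cref{lem:contextual_slice_ext_closed}, the hypotheses of which hold because $(\CC \sslash \Gamma)$ is contextual and $A$ becomes closed after relativizing. Your additional bookkeeping about transporting $\bm{p}_A^\ast$, $\bm{q}_A$ across the isomorphism is exactly the content the paper leaves implicit.
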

\begin{proof}
  We have $(\CC \sslash \Gamma.A) \cong ((\CC \sslash \Gamma) \sslash 1_{(\CC\sslash\Gamma)}.A)$.
  Then the result follows from~\cref{lem:contextual_slice_ext_closed}.
\end{proof}

\subsection{Properties of the category of sections}\label{ssec:proofs_generic_section}

We now prove the properties of the category of sections of a displayed first-order model.
Fix a global internal first-order model $\CM : \CC \to \CMod_\Th^\op$ and a displayed first-order model $\CM^\bullet : \CC \to \mathbf{DispMod}_\Th^\op$ over $\CM$.

We prove conditions that relate the existence of some finite limits in $\CSect_{\Th}^{\op}[\CM^\bullet]$ to the universal properties of the first-order models $\CM(\Gamma)$ for $\Gamma \in \CC$.

We have defined the category $\CSect_{\Th}^{\op}[\CM^\bullet]$ of sections of $\CM^\bullet$ as the following pullback
\[ \begin{tikzcd}
    \CSect_{\Th}^{\op}[\CM^\bullet]
    \ar[rd, phantom, very near start, "\lrcorner"]
    \ar[d, "\pi_{0}"]
    \ar[r, "{\sem{-}_0}"]
    & \CSect_\Th^\op
    \ar[d]
    \\
    \CC
    \ar[r, "{\CM^\bullet}"]
    & \mathbf{DispModel}_\Th^\op \rlap{\ .}
  \end{tikzcd}
\]

Unfolding the definition, an object of $\CSect_{\Th}^{\op}[\CM^\bullet]$ is a pair $(\Gamma,\sem{-}_\Gamma)$ where $\Gamma \in \CC$ and $\sem{-}_\Gamma$ is a section of $\CM^\bullet(\Gamma)$ over $\CM(\Gamma)$.
A morphism of from $(\Gamma,\sem{-}_\Gamma)$ to $(\Delta,\sem{-}_\Delta)$ is a morphism $f : \CC(\Gamma,\Delta)$ such that the outer square commutes in the following diagram:
\[ \begin{tikzcd}
    \CM^\bullet(\Delta)
    \ar[r, "{\CM^\bullet(f)}"]
    \ar[d, -{Triangle[open]}]
    & \CM^\bullet(\Gamma)
    \ar[d, -{Triangle[open]}]
    \\ \CM(\Delta)
    \ar[r, "{\CM(f)}"]
    \ar[u, "{\sem{-}_\Gamma}", bend left]
    & \CM(\Gamma) \rlap{\ .}
    \ar[u, "{\sem{-}_\Delta}"', bend right]
  \end{tikzcd} \]

\begin{lemma}\label{lem:generic_section_terminal}
  If $\CC$ has a terminal object $1_\CC$ and $\CM(1_\CC)$ is the initial model of $\Th$, then $\CSect^{\op}_{\CM^\bullet}$ has a terminal object that is strictly preserved by $\pi_0$.
\end{lemma}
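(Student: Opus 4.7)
The plan is to produce the terminal object as $T \triangleq (1_{\CC}, \sem{-}_{1_\CC})$ for a section $\sem{-}_{1_\CC}$ of $\CM^\bullet(1_\CC)$ that is uniquely determined by initiality. Strict preservation by $\pi_0$ will then be by definition.

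The first step is to construct $\sem{-}_{1_\CC}$. Since $\CM(1_\CC)$ is assumed initial in $\CMod_\Th$, there is a unique $\CMod_\Th$-morphism $s : \CM(1_\CC) \to \CM^\bullet(1_\CC)$. I will set $\sem{-}_{1_\CC} \triangleq s$. To see that this is a section of the forgetful map $\CM^\bullet(1_\CC) \to \CM(1_\CC)$, note that the composite $\CM(1_\CC) \to \CM^\bullet(1_\CC) \to \CM(1_\CC)$ is a $\CMod_\Th$-endomorphism of $\CM(1_\CC)$, and the identity is the only such endomorphism by initiality.

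For terminality, fix an arbitrary object $(\Gamma, \sem{-}_\Gamma) \in \CSect_\Th^\op[\CM^\bullet]$. The underlying $\CC$-component of any morphism $(\Gamma, \sem{-}_\Gamma) \to T$ must be the unique arrow $!_\Gamma : \Gamma \to 1_\CC$, so uniqueness is immediate from terminality of $1_\CC$ in $\CC$. For existence I need to verify the square
\[ \CM^\bullet(!_\Gamma) \circ \sem{-}_{1_\CC} \;=\; \sem{-}_\Gamma \circ \CM(!_\Gamma) \]
as morphisms $\CM(1_\CC) \to \CM^\bullet(\Gamma)$ in $\CMod_\Th$. Both sides are $\CMod_\Th$-morphisms out of the initial model, so they coincide. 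This gives the required morphism in $\CSect_\Th^\op[\CM^\bullet]$, and we are done.

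There is no real obstacle here: the entire argument is a double application of universal properties, once in $\CC$ (for the base component) and twice in $\CMod_\Th$ (once to build $\sem{-}_{1_\CC}$ and verify it is a section, once to discharge the compatibility square). The only thing to be careful about is the variance, since $\CM$ and $\CM^\bullet$ land in opposite categories; but once the arrows are oriented correctly, initiality of $\CM(1_\CC)$ makes every required diagram commute automatically.
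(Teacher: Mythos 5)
Your proof is correct and takes essentially the same route as the paper's: the section $\sem{-}_{1_\CC}$ is obtained from initiality of $\CM(1_\CC)$, uniqueness of the base morphism follows from terminality of $1_\CC$ in $\CC$, and the compatibility square commutes because both composites are morphisms out of the initial model. The only cosmetic point is that $\CM^\bullet(1_\CC)$ is a \emph{displayed} model, so your ``unique morphism $\CM(1_\CC) \to \CM^\bullet(1_\CC)$'' should strictly be read as the unique morphism into its total model, followed by the endomorphism argument you give --- which is precisely how the paper's phrase ``since $\CM(1_\CC)$ is initial, we obtain a section'' is justified.
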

\begin{proof}
  Since $\CM(1_\CC)$ is initial, we obtain a section $\sem{-}_{1_\CC}$ of $\CM^\bullet(1_\CC)$.
  We now prove that $(1_\CC,\sem{-}_{1_\CC})$ is terminal in $\CSect^{\op}_{\CM^\bullet}$.
  Let $(\Gamma,\sem{-}_\Gamma)$ be any object of $\CSect^{\op}_{\CM^\bullet}$.
  A morphism from $(\Gamma,\sem{-}_\Gamma)$ to $(1_\CC,\sem{-}_{1_\CC})$ is a morphism $f : \CC(\Gamma,1_\CC)$ such that the following square commutes:
  \[ \begin{tikzcd}
      \CM^\bullet(1_\CC)
      \ar[r, "\CM^\bullet(f)"]
      & \CM^\bullet(\Gamma)
      \\
      \CM(1_\CC)
      \ar[u, "{\sem{-}_{1_\CC}}"]
      \ar[r, "\CM(f)"]
      & \CM(\Gamma) \rlap{\ .}
      \ar[u, "{\sem{-}_{\Gamma}}"']
    \end{tikzcd} \]

  Since $1_\CC$ is terminal, there is only one morphism $f : \CC(\Gamma,1_\CC)$, and the corresponding square commutes by initiality of $\CM(1_\CC)$.
\end{proof}

\begin{definition}\label{def:internal_compatibility_condition}
  Let $X$ be a presheaf over $\CC$ and $Y$ be a locally representable dependent presheaf over $X$.
  Assume given the data of global elements
  \begin{alignat*}{3}
    & f_X && :{ }
    && X \to \CM.\Ty(1_\CM), \\
    & f_Y && :{ }
    && (x : X) \to Y(x) \to \CM.\Tm(1_\CM, f_X(x))
  \end{alignat*}
  of $\CPsh(\CC)$.

  We say that $f_Y$ is compatible with $\CM$ if for every element $x : X(\Gamma)$, the external first-order model $\CM(\Gamma.Y(x))$ satisfies the following universal property:
  for every first-order model $\CE$, morphism $E : \CM(\Gamma) \to \CE$ and element $z : \CE.\Tm(1_\CE, E(f_X(x)))$, there is a unique morphism $\widetilde{E} : \CM(\Gamma.Y(x)) \to \CE$ such that $E = \widetilde{E} \circ \CM(\bm{p}_x)$ and $z = \widetilde{E}(f_Y(x, \bm{q}_x))$.

  In other words, $\CM(\Gamma.Y(x))$ should be the free extension of $\CM(\Gamma)$ by an element $f_Y(x, \bm{q}_x)$ of type $f_X(x)$, the extension being witnessed by the morphism $\CM(\bm{p}_x) : \CM(\Gamma) \to \CM(\Gamma.Y(x))$.
  \lipicsEnd{}
\end{definition}

\begin{lemma}\label{lem:cwf_morphism_compatibility}
  Let $F : \CC \to \CD$ be a CwF morphism.
  
  Then the condition of~\cref{def:internal_compatibility_condition} is satisfied, with $\CM = F^\ast(\Tele_\MD)$, $X$ and $Y$ being respectively the types and terms of $\CC$, and $f_X$ and $f_Y$ being the actions of $F$ on types and terms.
\end{lemma}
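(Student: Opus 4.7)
The plan is to reduce this claim to Lemma~\ref{lem:contextual_slice_ext} via the identification of the telescopic contextualization with the contextual slice functor. By Lemma~\ref{lem:contextual_slice}, the internal first-order model $\Tele_\MD$ in $\CPsh(\CD)$ corresponds to the functor $(\CD \sslash -) : \CD \to \CMod_\Th^\op$. Restriction along $F$ is precomposition, so $\CM = F^\ast(\Tele_\MD)$ corresponds to $(\CD \sslash F(-)) : \CC \to \CMod_\Th^\op$. Since $X$ and $Y$ are the types and terms of $\CC$, the local representability data $(\Gamma.A, \bm{p}_A, \bm{q}_A)$ is the CwF structure of $\CC$. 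Unfolding the compatibility condition of Definition~\ref{def:internal_compatibility_condition} at an arbitrary $A : \CC.\Ty(\Gamma)$, the goal becomes showing that $(\CD \sslash F(\Gamma.A))$ is the free extension of $(\CD \sslash F(\Gamma))$ by an element of type $F(A)$, with the extension witnessed by $\CM(\bm{p}_A)$ and with generic element $F(\bm{q}_A)$.

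Since $F$ is a CwF morphism, it strictly preserves the CwF structure: $F(\Gamma.A) = F(\Gamma).F(A)$, $F(\bm{p}_A) = \bm{p}_{F(A)}$, and $F(\bm{q}_A) = \bm{q}_{F(A)}$. Consequently $\CM(\Gamma.A) = (\CD \sslash F(\Gamma).F(A))$, the morphism $\CM(\bm{p}_A)$ is the pullback weakening $\bm{p}_{F(A)}^\ast : (\CD \sslash F(\Gamma)) \to (\CD \sslash F(\Gamma).F(A))$, and $f_Y(A, \bm{q}_A) = F(\bm{q}_A) = \bm{q}_{F(A)}$. Lemma~\ref{lem:contextual_slice_ext} then applies verbatim: for any first-order model $\CE$, morphism $E : (\CD \sslash F(\Gamma)) \to \CE$ and element $z : \CE.\Tm(1_\CE, E(F(A)))$, there is a unique $\widetilde{E} : (\CD \sslash F(\Gamma).F(A)) \to \CE$ with $\widetilde{E} \circ \bm{p}_{F(A)}^\ast = E$ and $\widetilde{E}(\bm{q}_{F(A)}) = z$, which is exactly the required universal property.

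There is no real obstacle here; the lemma is essentially bookkeeping, renaming the universal property established in Lemma~\ref{lem:contextual_slice_ext} along the identifications above. The only content is that CwF morphisms preserve context extensions on the nose, which is what allows the identification $F(\Gamma.A) = F(\Gamma).F(A)$ needed to match the two formulations of the universal property.
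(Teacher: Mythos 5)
Your proposal is correct and follows essentially the same route as the paper: identify $F^\ast(\Tele_\MD)$ with the functor $\Gamma \mapsto (\CD \sslash F(\Gamma))$ via the contextual-slice description, use that the CwF morphism $F$ preserves context extensions to rewrite $(\CD \sslash F(\Gamma.A))$ as $(\CD \sslash F(\Gamma).F(A))$, and conclude by the universal property of \cref{lem:contextual_slice_ext}. The only (harmless) difference is that you assume $F$ preserves extensions strictly, whereas the paper only invokes $F(\Gamma.A) \cong F(\Gamma).F(A)$ and transports the universal property along that isomorphism.
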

\begin{proof}
  By~\cref{def:contextual_slice}, $\CM$ correspond to the functor
  \[ \CC \ni \Gamma \mapsto (\CD \sslash F(\Gamma)) \in \CMod_\Th^\op. \]

  Thus, we need to check that given any $\Gamma \in \CC$ and $A : \CC.\Ty(\Gamma)$, the model $(\CD \sslash F(\Gamma.A))$ is the free extension of $(\CD \sslash F(\Gamma))$ by a generic element of type $F(A)$.
  Since $F$ preserves extensions, $F(\Gamma.A) \cong F(\Gamma).F(A)$, and thus the result follows by~\cref{lem:contextual_slice_ext}.
\end{proof}

\begin{lemma}\label{lem:generic_section_locally_representable}
  Let $X$ be a presheaf over $\CC$ and $Y$ be a locally representable presheaf family over $X$ equipped with operations $f_X$ and $f_Y$ satisfying the condition of~\cref{def:internal_compatibility_condition}.
  Finally, assume that for every $(\Gamma,\sem{-}_\Gamma) \in \CSect^\op_{\CM^\bullet}$, $\Delta \in \CC$, $\gamma : \CC(\Delta,\Gamma)$, $x : X(\Gamma)$ and $y : Y(\Delta,x[\gamma])$ we have
  \begin{alignat*}{3}
    & f_Y^\bullet(x,y) && :{ }
    && \CM^\bullet(\Delta).\Tm^\bullet(1^\bullet, \CM^\bullet(\gamma)(\sem{f_X(x)}_\Gamma), f_Y(x[\gamma],y)),
  \end{alignat*}
  naturally in $(\Gamma,\sem{-}_\Gamma)$ and $\Delta$.

  Consider the presheaf $X_0 \triangleq \pi_0^\ast(X)$ over $\CSect^{\op}_{\CM^\bullet}$ and the dependent presheaf $Y_0$ over $X_0$ specified on objects by:
  \begin{alignat*}{3}
    & Y_0((\Gamma,\sem{-}_\Gamma),x) && \triangleq{ }
    && \{ y : Y(\Gamma,x) \mid \sem{f_Y(x,y)}_\Gamma = f_Y^\bullet(x,y) \}.
  \end{alignat*}
  Then the presheaf family $Y_0$ is locally representable and the action induced by the first projections $Y_0((\Gamma,\sem{-}_\Gamma),x) \to Y(\Gamma,x)$ strictly preserve context extensions.
\end{lemma}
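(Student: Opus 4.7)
The plan is to construct, for each pair $(\Gamma, \sem{-}_\Gamma) \in \CSect^\op_{\Th}[\CM^\bullet]$ and each $x \in X(\Gamma)$, a representing object for the local representability of $Y_0$. Since the statement also demands that the first projection $Y_0 \to Y$ strictly preserve context extensions, the underlying object in $\CC$ is forced to be $\Gamma.Y(x)$ with projection $\bm{p}_x$ inherited from the local representability of $Y$; all the work lies in lifting this to $\CSect^\op_{\Th}[\CM^\bullet]$ by producing a suitable section $\sem{-}_{\Gamma.Y(x)}$ of $\CM^\bullet(\Gamma.Y(x))$ over $\CM(\Gamma.Y(x))$ that (a) makes $\bm{p}_x$ a morphism in $\CSect^\op_{\Th}[\CM^\bullet]$ and (b) places the generic element $\bm{q}_x$ inside $Y_0$.

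The key ingredient is the compatibility condition of~\cref{def:internal_compatibility_condition}, which states that $\CM(\Gamma.Y(x))$ is the free extension of $\CM(\Gamma)$ by a term of type $f_X(x)$, witnessed by the pair $(\CM(\bm{p}_x), f_Y(x[\bm{p}_x], \bm{q}_x))$. I would apply this universal property with $\CE$ being the first-order model obtained as the total of the displayed first-order model $\CM^\bullet(\Gamma.Y(x))$, with $E$ the composite $\CM^\bullet(\bm{p}_x) \circ \sem{-}_\Gamma : \CM(\Gamma) \to \CE$, and with $z$ the term of $\CE$ assembled from $f_Y(x[\bm{p}_x], \bm{q}_x)$ on the base and $f_Y^\bullet(x[\bm{p}_x], \bm{q}_x)$ on the displayed side; this is well-typed precisely because $f_Y^\bullet$ is displayed over $f_Y$ with displayed type $\CM^\bullet(\bm{p}_x)(\sem{f_X(x)}_\Gamma) = E(f_X(x))$. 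The universal property then yields a unique morphism $\widetilde{E} : \CM(\Gamma.Y(x)) \to \CE$, and I set $\sem{-}_{\Gamma.Y(x)} \triangleq \widetilde{E}$. That $\widetilde{E}$ is a section (i.e., splits the projection $\CE \to \CM(\Gamma.Y(x))$) follows from the uniqueness half of the same universal property applied to the projection of $\widetilde{E}$ and to the identity on $\CM(\Gamma.Y(x))$: both extend $\CM(\bm{p}_x)$ and send $f_Y(x[\bm{p}_x], \bm{q}_x)$ to itself. Conditions (a) and (b) then hold by construction.

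For the universal property of the representing object, suppose given $\gamma : (\Delta, \sem{-}_\Delta) \to (\Gamma, \sem{-}_\Gamma)$ in $\CSect^\op_{\Th}[\CM^\bullet]$ and $y \in Y_0((\Delta, \sem{-}_\Delta), x[\gamma])$. The only candidate for the underlying morphism is $\angles{\gamma, y} : \Delta \to \Gamma.Y(x)$, coming from local representability of $Y$; the nontrivial task is to verify that $\angles{\gamma, y}$ is actually a morphism in $\CSect^\op_{\Th}[\CM^\bullet]$, i.e., that $\CM^\bullet(\angles{\gamma, y}) \circ \sem{-}_{\Gamma.Y(x)} = \sem{-}_\Delta \circ \CM(\angles{\gamma, y})$. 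Both sides are morphisms $\CM(\Gamma.Y(x)) \to \CM^\bullet(\Delta)$, and I would compare them by invoking uniqueness in the compatibility condition one more time: precomposing with $\CM(\bm{p}_x)$ gives the same morphism on both sides because $\gamma$ already respects the sections, and the values at $f_Y(x[\bm{p}_x], \bm{q}_x)$ coincide by the naturality of $f_Y^\bullet$ along the morphism $\angles{\gamma, y}$ together with the defining condition $\sem{f_Y(x[\gamma], y)}_\Delta = f_Y^\bullet(x[\gamma], y)$ ensured by $y \in Y_0$.

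The main obstacle will be unpacking the naturality of $f_Y^\bullet$ correctly: the operation is natural both in $(\Gamma, \sem{-}_\Gamma)$ and in $\Delta$, and here it is the $\Delta$-naturality specialised along $\angles{\gamma, y} : \Delta \to \Gamma.Y(x)$ (which pulls the universal pair $(x, \bm{q}_x)$ back to $(x[\gamma], y)$) that must be combined with the $Y_0$-membership equation. Once this square is correctly set up, everything else is a bookkeeping application of the uniqueness half of the compatibility condition. Strict preservation by the first projection $Y_0 \to Y$ of context extensions holds on the nose because, by construction, the representing object over $(\Gamma, \sem{-}_\Gamma, x)$ has underlying object $\Gamma.Y(x)$ and projection $\bm{p}_x$.
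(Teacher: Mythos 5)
Your proof is correct and follows essentially the same route as the paper's: build the lifted section $\sem{-}_{\Gamma.Y(x)}$ by the freeness condition of~\cref{def:internal_compatibility_condition} applied to the total model of $\CM^\bullet(\Gamma.Y(x))$ with the pair $(f_Y(x[\bm{p}_x],\bm{q}_x), f_Y^\bullet(x[\bm{p}_x],\bm{q}_x))$, then verify the universal property by the uniqueness half of that condition together with naturality of $f_Y^\bullet$ and the $Y_0$-membership equation. Your additional verification that $\widetilde{E}$ really is a section (comparing with the identity via uniqueness) is a detail the paper leaves implicit, not a different argument.
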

\begin{proof}
  Let $(\Gamma,\sem{-}_\Gamma)$ be an object of $\CSect^{\op}_{\CM^\bullet}$ and $x : X(\Gamma)$ be an element of $X_0$ at this object.
  We have to prove that the presheaf ${Y_0}_{\mid x}$ over $(\CSect^{\op}_{\CM^\bullet} / (\Gamma,\sem{-}_\Gamma))$ is representable.

  Consider the diagram:
  \[ \begin{tikzcd}
      \CM^\bullet(\Gamma)
      \ar[r, "\CM^\bullet(\bm{p}_x)"]
      \ar[d, -{Triangle[open]}]
      & \CM^\bullet(\Gamma.Y(x))
      \ar[d, -{Triangle[open]}]
      \\ \CM(\Gamma)
      \ar[r, "\CM(\bm{p}_x)"]
      \ar[u, bend left, "\sem{-}_{\Gamma}"]
      & \CM(\Gamma.Y(x))
      \ar[u, bend right, dashed, "\sem{-}_{\Gamma.Y(x)}"']
    \end{tikzcd} \]

  We construct a section $\sem{-}_{\Gamma.Y(x)}$ of $\CM^\bullet(\Gamma.Y(x))$ over $\CM(\Gamma.Y(x))$.
  Using the universal property of $\CM(\Gamma.Y(x))$, we define $\sem{-}_{\Gamma.Y(x)}$ as the unique extension of $\CM^\bullet(\bm{p}_x) \circ \sem{-}_\Gamma$ that sends $f_Y(x,\bm{q}_x)$ to $f^\bullet_Y(x,\bm{q}_x)$.

  We can check that $\bm{p}_x$ lifts to a morphism $\bm{p}_x : (\Gamma.Y(x), \sem{-}_{\Gamma.Y(x)}) \to (\Gamma, \sem{-}_\Gamma)$ in $\CSect^{\op}_{\CM^\bullet}$.

  We now show that $((\Gamma.Y(x), \sem{-}_{\Gamma.Y(x)}), \bm{p}_x)$ represents the functor ${Y_0}_{\mid x}$.
  Let $(\Delta, \sem{-}_\Delta)$ be another object of $\CSect^{\op}_{\CM^\bullet}$, with a morphism $\rho : (\Delta, \sem{-}_\Delta) \to (\Gamma, \sem{-}_\Gamma)$ and an element $y : {Y_0}_{\mid x}((\Delta, \sem{-}_\Delta), \rho)$.
  Unfolding the definitions, we have $y : Y(\Delta, x[\rho])$ with $\sem{f_Y(x[\rho], y)}_\Delta = f_Y^\bullet(x[\rho], y)$.

  The local representability of $Y$ implies that there is a unique morphism $\widetilde{\rho} : \Delta \to \Gamma.Y(x)$ in $\CC$ such that $\bm{p}_x \circ \widetilde{\rho} = \rho$ and $\bm{q}_x[\widetilde{\rho}] = y$.
  We have to show that this morphism lifts to $\CSect^{\op}_{\CM^\bullet}$, \ie{} that the following square commutes:
  \[ \begin{tikzcd}
      \CM^\bullet(\Gamma.Y(x))
      \ar[r, "\CM^\bullet(\widetilde{\rho})"]
      \ar[d, -{Triangle[open]}]
      & \CM^\bullet(\Delta)
      \ar[d, -{Triangle[open]}]
      \\ \CM(\Gamma.Y(x))
      \ar[r, "\CM(\widetilde{\rho})"]
      \ar[u, bend left, "\sem{-}_{\Gamma.Y(x)}"]
      & \CM(\Delta)
      \ar[u, bend right, "\sem{-}_{\Delta}"']
    \end{tikzcd} \]
  By the universal property of $\CM(\Gamma.Y(x))$, it suffices to show that $f_Y(x,\bm{q}_x)$ is mapped to the same element by the compositions $\CM^\bullet(\widetilde{\rho}) \circ \sem{-}_{\Gamma.Y(x)}$ and $\sem{-}_\Delta \circ \CM(\widetilde{\rho})$.
  We compute $\CM^{\bullet}(\widetilde{\rho})(\sem{f_Y(x,\bm{q}_x)}_{\Gamma.Y(x)}) = \CM^{\bullet}(\widetilde{\rho})(f_Y^\bullet(x, \bm{q}_x)) = f_Y^\bullet(x[\rho], y)$ and $\sem{\CM(\widetilde{\rho})(f_Y(x,\bm{q}_x))}_\Delta = \sem{f_Y(x[\rho],y)}_\Delta = f_Y^\bullet(x[\rho], y)$.

  This completes the proof that $((\Gamma.Y(x), \sem{-}_{\Gamma.Y(x)}), \bm{p}_x)$ represents the functor ${Y_0}_{\mid x}$.

  We have proven that ${Y_0}_{\mid x}$ is representable for every $x$, \ie that $Y_0$ is locally representable.
  The first projections $Y_0((\Gamma,\sem{-}_\Gamma),x) \to Y(\Gamma,x)$ strictly preserve the chosen representing objects.
\end{proof}

\subsection{Proofs of relative induction principles}

\begin{proof}[Proof of~\cref{thm:relative_induction_principle_renamings}]\label{prf:relative_induction_principle_renamings}
  We consider the category $\CSect^{\op}_{\Th}[\CScone_{\MS^\bullet}]$ of sections of $\CScone_{\MS^\bullet}$.

  By~\cref{lem:generic_section_terminal}, the category $\CSect^{\op}_{\Th}[\CScone_{\MS^\bullet}]$ has a terminal object.
  We equip it with the structure of a higher-order renaming algebra $(\Var_0,\var_0)$ over $(F \circ \pi_0) : \CSect^{\op}_{\Th}[\CScone_{\MS^\bullet}] \to \CS$ as follows:
  \begin{alignat*}{1}
    & \Var_0((\Gamma,\sem{-}_\Gamma),A) \triangleq
    \{ a : \Var(\Gamma,A) \mid \sem{ \var(\Gamma,a) }_\Gamma = \var^\bullet(\Gamma,\sem{ A }_\Gamma, a) \}, \\
    & \var_0((\Gamma,\sem{-}_\Gamma),A,a) \triangleq
      \var(\Gamma,a).
  \end{alignat*}

  By~\cref{lem:cwf_morphism_compatibility}, the action of $F : \CR \to \CS$ on variables is compatible with $\CS_F$.
  By~\cref{lem:generic_section_locally_representable}, the presheaf family $\Var_0$ is locally representable and the first projections $\Var_0((\Gamma,\sem{-}_\Gamma),A) \to \Var(\Gamma,A)$ strictly preserve context extensions.
  By initiality of $\CR$ among first-order renaming algebras, we obtain a section $H$ of $\pi_0$ in the category of first-order renaming algebras.

  We thus have a section $\sem{-} \triangleq H^\ast(\sem{-}_0)$ of $\CScone_{\MS^\bullet}$ in $\CPsh(\CR)$.
  The action of $H$ on variables proves that it satisfies the equality $\sem{\var_A(x)} = \var^\bullet(\sem{A}, x)$.
\end{proof}


\section{Example: Syntactic parametricity}\label{sec:parametricity}

In this appendix, we show how our constructions can be combined to obtain a syntactic parametricity translation for a dependent type theory with universes.
Syntactic parametricity \cite{bernardy12parametricity,popl16} refers to the situation where the result of the translation itself consists of types and terms from the syntax, as opposed to metatheoretic entities such as sets or presheaves.

Syntactic parametricity is an interesting application of our methods because the motives and methods of the induction are given over some category (here the syntax), but the result of the induction is only obtained over some other category (here the terminal category, selecting the empty context).
We construct a displayed higher-order model and its $\CScone$-contextualization in presheaves over the syntax, but we only look at the section of its externalization, and never obtain a section of the full $\CScone$-contextualization.

Let $\Th$ be a dependent type theory with a hierarchy of Coquand universes~\cite{CoquandPresheafNote,DBLP:conf/csl/Kovacs22}, such that types at every level are closed under $\Pi$-types.
This means that a higher-order model of $\Th$ consists of the following data, where every line quantifies over a universe level $i \in \Nat$:
\begin{alignat*}{1}
  & \Ty_i : \SSet, \\
  & \Tm_i : \Ty_i \to \SSet, \\
  & \UU_i : \Ty_{i+1}, \\
  & \El_i : \Tm(\UU_i) \cong \Ty_{i}, \\
  & \mathsf{Lift}_i : \Ty_i \to \Ty_{i+1}, \\
  & \mathsf{lift}_i : \Tm_i(A) \cong \Tm_{i+1}(\mathsf{Lift}_i(A)), \\
  & \Pi : (A : \Ty_i) \to (\Tm_i(A) \to \Ty_i) \to \Ty_i, \\
  & \app : \Tm_i(\Pi(A,B)) \cong ((a : \Tm_i(A)) \to \Tm_i(B(a))).
\end{alignat*}
We write $\CS$ for the syntax of $\Th$, that is the initial first-order model of $\Th$; the constructions of the paper generalize to this theory.

We construct, internally to $\CPsh(\CS)$, a displayed higher-order model $\MS^\bullet$ over $\Tele_\MS$:
\begin{itemize}
\item A displayed type over $A : \MS.\Ty_i$ is a type family
  \[ A^\bullet : \MS.\Tm(A) \to \MS.\Ty_{i}. \]
\item A displayed term of type $A^\bullet$ over $a : \MS.\Tm(A)$ is an element
  \[ a^\bullet : \MS.\Tm(A^\bullet(a)). \]
\item The displayed universe $\UU^\bullet_i$ is defined as
  \[ \UU^\bullet_i \triangleq \lambda A \mapsto \Pi(\El(A), \UU_i), \]
  so that we have an isomorphism $\UU^\bullet_i(A) \cong \Ty^\bullet_i(\El_i(A))$.
\item The displayed lifting operation $\mathsf{Lift}^\bullet_i$ is defined by
  \[ \mathsf{Lift}^\bullet_i(A^\bullet) \triangleq \lambda a \mapsto \mathsf{Lift}_i(A^\bullet(\mathsf{lift}_i^{-1}(a))); \]
  we have an isomorphism $\Tm^\bullet_i(A^\bullet,a) \cong \Tm^\bullet_{i+1}(\mathsf{Lift}^\bullet_i(A^\bullet), \mathsf{lift}_i(a))$.
\item The displayed $\Pi$-type $\Pi^\bullet(A^\bullet,B^\bullet)$ is defined as
  \[ \Pi^\bullet(A^\bullet,B^\bullet) \triangleq \lambda f \mapsto (\Pi(A, \lambda a \mapsto \Pi(A^\bullet(a), \lambda a^\bullet\mapsto B^\bullet(a^\bullet, \app(f,a)))).\]
\end{itemize}

By the induction principle of $\CS$ relative to $1_\CS : 1_\CCat \to \CS$, we have a section $\sem{-}$ of $1_\CS^\ast(\CScone_{\MS^\bullet})$.
Let $f$ be any closed term of $\CS$ of type $(A : \UU_0) \to A \to A$.

Then $\sem{f}$ is a closed term of $\CS$ of type $\sem{(A : \UU_0) \to A \to A}(f)$ witnessing the fact that $f$ satisfies parametricity.
Using the computation rules of $\sem{-}$, we can compute
\begin{alignat*}{1}
  & \sem{(A : \UU_0) \to A \to A}(f) \\
  & \quad = (A : \UU_0)(A^\bullet : A \to \UU_0) \to \sem{\lambda A \mapsto A \to A}(A^\bullet,f(A)) \\
  & \quad = (A : \UU_0)(A^\bullet : A \to \UU_0)(a : A)(a^\bullet : A^\bullet(A)) \to \sem{\lambda (A,a) \mapsto A}((A^\bullet,a^\bullet),f(A,a)) \\
  & \quad = (A : \UU_0)(A^\bullet : A \to \UU_0)(a : A)(a^\bullet : A^\bullet(A)) \to A^\bullet(f(A,a)).
\end{alignat*}


\section{Induction principles for cubical type theories}\label{sec:cubical}

In this section we state and prove relative induction principles for a minimal cubical type theory.
It should be possible to use these induction principles (extended to larger cubical type theories) in order to express in our framework the known proofs of strict canonicity, homotopy canonicity~\cite{DBLP:journals/lmcs/CoquandHS22} and normalization~\cite{NormalizationCTT} for cubical type theory.

\begin{definition}
  A \defemph{higher-order cubical algebra} is a set $\MI$ with two points $0, 1 : \MI$.
  \lipicsEnd{}
\end{definition}

\begin{definition}
  A \defemph{first-order cubical algebra} is a category $\CC$ with a terminal object $1_\CC$, along with a locally representable presheaf $\MI$ with two global elements $0, 1$.
  \lipicsEnd{}
\end{definition}

The cube category $\square$ is the initial first-order cubical algebra.

We consider a minimal cubical type theory $\Th_{\mathsf{CTT}}$ with only $\Pi$-types and path types.
A higher-order model of $\Th_{\mathsf{CTT}}$ consists of the following data:
\begin{alignat*}{1}
  & \MI : \SSet, \\
  & 0,1 : \MI, \\
  & \Ty : \SSet, \\
  & \Tm : \Ty \to \SSet, \\
  & \Pi : (A : \Ty) \to (\Tm(A) \to \Ty) \to \Ty, \\
  & \app : \Tm(\Pi(A,B)) \cong ((a : \Tm(A)) \to \Tm(B(a))), \\
  & \mathsf{Path} : (A : \MI \to \Ty) \to \Tm(A(0)) \to \Tm(A(1)) \to \Ty, \\
  & \mathsf{papp} : \Tm(\mathsf{Path}(A,x,y)) \to (i : \MI) \to \Tm(A(i)), \\
  & (p : \Tm(\mathsf{Path}(A,x,y))) \to \mathsf{papp}(p,0) = x, \\
  & (p : \Tm(\mathsf{Path}(A,x,y))) \to \mathsf{papp}(p,1) = y, \\
  & \mathsf{plam} : (A : \MI \to \Ty) \to (p : (i : \MI) \to \Tm(A(i))) \to \Tm(\mathsf{Path}(A,p(0),p(1))).
\end{alignat*}

We write $\CS_{\mathsf{CTT}}$ for the initial first-order model of $\Th_{\mathsf{CTT}}$.

By the universal property of $\square$, the cubical algebra $(\CS_{\mathsf{CTT}}.\MI,\CS_{\mathsf{CTT}}.0,\CS_{\mathsf{CTT}}.1)$ specifies a morphism $F : \square \to \CS_{\mathsf{CTT}}$ of first-order cubical algebras.
We pose $\CS_F \triangleq F^\ast(\CTele_{\MS_{\mathsf{CTT}}})$.
We write $\mathsf{int} : \square.\MI \to \CS_F.\MI$ for the action of $F$ on elements of the interval in $\CPsh(\square)$.

\begin{theorem}[Induction principle relative to $F : \square \to \CS_{\mathrm{CTT}}$]\label{thm:relative_induction_principle_cubical}
  Let $\MS^\bullet$ be a displayed higher-order model of CTT over $\CS_F$, and assume
  given the additional data of a function
  \begin{alignat*}{1}
    & \mathsf{int}^{\bullet} : (i : \square.\MI) \to \MI^\bullet(\mathsf{int}(i))
  \end{alignat*}
  such that
  \begin{alignat*}{1}
    & \mathsf{int}^{\bullet}(0) = 0^\bullet, \\
    & \mathsf{int}^{\bullet}(1) = 1^\bullet.
  \end{alignat*}

  Then there exists a section $\sem{-}$ of $\CScone_{\MS^\bullet}$. It satisfies the additional equality
  \begin{alignat*}{1}
    & \sem{\mathsf{int}(i)} = \mathsf{int}^\bullet(i).
  \end{alignat*}
\end{theorem}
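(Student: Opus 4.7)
The plan is to follow the template of~\cref{thm:relative_induction_principle_renamings}, replacing the category of first-order renaming algebras by that of first-order cubical algebras. I will form the category $\CSect^{\op}_{\Th_{\mathsf{CTT}}}[\CScone_{\MS^\bullet}]$ of sections of the displayed contextualization and equip it with the structure of a first-order cubical algebra over $F \circ \pi_0$, in such a way that $\pi_0$ becomes a strict morphism of cubical algebras. Initiality of $\square$ in that category will then produce a section $H$ of $\pi_0$, whose underlying data yields the required section $\sem{-} \triangleq H^\ast(\sem{-}_0)$.

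The terminal object is supplied by~\cref{lem:generic_section_terminal}: since $F$ preserves the terminal object, $\CS_F(1_\square)$ is the initial first-order model of $\Th_{\mathsf{CTT}}$. I will then lift the interval and its two endpoints by defining
\[ \MI_0((\Gamma,\sem{-}_\Gamma)) \triangleq \{ i : \square.\MI(\Gamma) \mid \sem{\mathsf{int}(i)}_\Gamma = \mathsf{int}^\bullet(i) \}, \]
with global elements $0, 1 \in \MI_0$ inherited from $\square$. The hypotheses $\mathsf{int}^\bullet(0) = 0^\bullet$ and $\mathsf{int}^\bullet(1) = 1^\bullet$, combined with the fact that any first-order-model section preserves the closed constants $0$ and $1$ (sending them to $0^\bullet$ and $1^\bullet$), ensure that $0$ and $1$ genuinely lie in $\MI_0$.

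To show $\MI_0$ is locally representable and strictly preserved by $\pi_0$, I will invoke~\cref{lem:generic_section_locally_representable} with $X = 1$, $Y(\star) = \square.\MI$, $f_X(\star) = \MI_{\CS_F}$, $f_Y(\star,i) = \mathsf{int}(i)$ and $f_Y^\bullet(\star,i) \triangleq \mathsf{int}^\bullet(i)$. The main hypothesis to discharge is the compatibility condition of~\cref{def:internal_compatibility_condition}, which here asks that $\CS_F(\Gamma.\MI) = (\CS_{\mathsf{CTT}} \sslash F(\Gamma.\MI))$ be the free extension of $(\CS_{\mathsf{CTT}} \sslash F(\Gamma))$ by a generic element of type $\MI$. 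This follows from $F$ being a morphism of cubical algebras (so $F(\Gamma.\MI) \cong F(\Gamma).\MI$) together with~\cref{lem:contextual_slice_ext}, exactly parallel to how~\cref{lem:cwf_morphism_compatibility} is used in the renaming proof.

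Initiality of $\square$ among first-order cubical algebras then yields a strict cubical-algebra morphism $H : \square \to \CSect^{\op}_{\Th_{\mathsf{CTT}}}[\CScone_{\MS^\bullet}]$ with $\pi_0 \circ H = \id_\square$, and the required section is $\sem{-} \triangleq H^\ast(\sem{-}_0)$. Its factoring through $\MI_0$ immediately forces the additional equality $\sem{\mathsf{int}(i)} = \mathsf{int}^\bullet(i)$. I expect the compatibility check of~\cref{def:internal_compatibility_condition} to be the principal technical step, but it is strictly simpler than the variable-based check in the renaming proof because only the single distinguished presheaf $\MI$ is involved.
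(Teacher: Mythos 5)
Your proposal is correct and follows essentially the same route as the paper's proof: the category of sections $\CSect^{\op}_{\Th}[\CScone_{\MS^\bullet}]$, the terminal object via \cref{lem:generic_section_terminal}, the lifted interval $\MI_0$ cut out by the equation $\sem{\mathsf{int}(i)}_\Gamma = \mathsf{int}^\bullet(i)$, local representability via \cref{lem:generic_section_locally_representable} with the compatibility condition discharged through the contextual-slice extension lemma, and initiality of $\square$ among first-order cubical algebras to obtain the section $H$ and hence $\sem{-} \triangleq H^\ast(\sem{-}_0)$. The only cosmetic difference is that you unfold the compatibility check via \cref{lem:contextual_slice_ext} directly where the paper cites the interval analogue of \cref{lem:cwf_morphism_compatibility}, which amounts to the same argument.
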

\begin{proof}
  We consider the category $\CSect^{\op}_{\Th}[\CScone_{\MS^\bullet}]$ of sections of $\CScone_{\MS^\bullet}$.

  By~\cref{lem:generic_section_terminal}, the category $\CSect^{\op}_{\Th}[\CScone_{\MS^\bullet}]$ has a terminal object.
  We equip it with a higher-order cubical algebra $(\MI_0,0_0,1_0)$ as follows:
  \begin{alignat*}{1}
    & \MI_0(\Gamma,\sem{-}_\Gamma) \triangleq
      \{ i : \square.\MI(\Gamma) \mid \sem{ \mathsf{int}(i) }_\Gamma = \mathsf{int}^\bullet(\Gamma, i) \}, \\
    & 0_0 \triangleq \square.0, \\
    & 1_0 \triangleq \square.1.
  \end{alignat*}

  The required equalities $\sem{\mathsf{int}(\square.0)}_\Gamma = \mathsf{int}^\bullet(\Gamma,\square.0)$ and $\sem{\mathsf{int}(\square.1)}_\Gamma = \mathsf{int}^\bullet(\Gamma,\square.1)$ follow from the assumptions
  $\mathsf{int}^{\bullet}(\square.0) = 0^\bullet$ and $\mathsf{int}^{\bullet}(\square.1) = 1^\bullet$.

  By~\cref{lem:cwf_morphism_compatibility}, the action of $F : \square \to \CS$ on the elements of $\square.\MI$ is compatible with $\CS_F$.
  By~\cref{lem:generic_section_locally_representable}, the presheaf $\MI_0$ is locally representable and the first projections $\MI_0(\Gamma,\sem{-}_\Gamma) \to \square.\MI(\Gamma)$ strictly preserve context extensions.
  By initiality of $\square$ among first-order cubical algebras, we obtain a section $H$ of $\pi_0$ in the category of first-order cubical algebras.

  We thus have a section $\sem{-} \triangleq H^\ast(\sem{-}_0)$ of $\CScone_{\MS^\bullet}$ in $\CPsh(\square)$.
  The action of $H$ on elements of the interval proves that $\sem{-}$ satisfies the equality $\sem{\mathsf{int}(i)} = \mathsf{int}^\bullet(i)$.
\end{proof}

\begin{definition}
  Let $\CC$ be a first-order model of $\Th_{\mathsf{CTT}}$.
  A \defemph{higher-order cubical renaming algebra} $\MC$ over $\CC$ consists of a higher-order renaming algebra $(\MC.\Var,\MC.\var)$ over $\CC$, a higher-order cubical algebra $(\MC.\MI,\MC.0,\MC.1)$, and an operation
  \[ \MC.\mathsf{int} : \MC.\MI \to \CC.\MI(1_\CC) \]
  such that $\MC.\mathsf{int}(0) = 0$ and $\MC.\mathsf{int}(1) = 1$.
  \lipicsEnd{}
\end{definition}

\begin{definition}
  Let $\CD$ be a first-order model of $\Th_{\mathsf{CTT}}$.
  A \defemph{first-order cubical renaming algebra} over $\CD$ is a category $\CC$ with a terminal object along with a functor $F : \CC \to \CD$ that preserves the terminal object and with the structure of a global higher-order cubical renaming algebra $\MC$ over $F^\ast(\Tele_\MD)$ such that both $\MC.\MI$ and $\MC.\Var$ are locally representable and both $\MC.\mathsf{int}$ and $\MC.\mathsf{var}$ strictly preserve context extensions.
  \lipicsEnd{}
\end{definition}

The category $\CA_\square$ of cubical renamings (or category of atomic cubical contexts and substitutions) is the initial first-order cubical renaming algebra over $\CS_{\mathsf{CTT}}$.
We write $G$ for the functor $G : \CA_\square \to \CS_{\mathsf{CTT}}$, and pose $\CS_G \triangleq G^\ast(\CTele_{\MS_{\mathsf{CTT}}})$.

\begin{theorem}[Induction principle relative to $F : \CA_{\square} \to \CS_{\mathrm{CTT}}$]\label{thm:relative_induction_principle_cubical_renaming}
  Let $\MS^\bullet$ be a displayed higher-order model of CTT over $\CS_G$, and assume
  given the additional data of functions
  \begin{alignat*}{1}
    & \var^{\bullet} : \forall A\ (A^\bullet : \Ty^\bullet(A))\ (a : \Var(A)) \to \Tm^\bullet(A^\bullet, \var(a)), \\
    & \mathsf{int}^{\bullet} : (i : \square.\MI) \to \MI^\bullet(\mathsf{int}(i))
  \end{alignat*}
  such that
  \begin{alignat*}{1}
    & \mathsf{int}^{\bullet}(0) = 0^\bullet, \\
    & \mathsf{int}^{\bullet}(1) = 1^\bullet.
  \end{alignat*}

  Then there exists a section $\sem{-}$ of $\CScone_{\MS^\bullet}$. It satisfies the additional equalities
  \begin{alignat*}{1}
    & \sem{\var_A(a)} = \var^\bullet(\sem{A},a), \\
    & \sem{\mathsf{int}(i)} = \mathsf{int}^\bullet(i).
  \end{alignat*}
\end{theorem}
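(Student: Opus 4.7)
The plan is to combine the proof strategies of Theorems \ref{thm:relative_induction_principle_renamings} and \ref{thm:relative_induction_principle_cubical} by equipping the category of sections with the structure of a first-order cubical renaming algebra. As before, I would form the category $\CSect^{\op}_{\Th_{\mathsf{CTT}}}[\CScone_{\MS^\bullet}]$ of sections of $\CScone_{\MS^\bullet}$ and its projection $\pi_0$ to $\CA_{\square}$, and then lift the terminal object, interval, and variable structure from $\CA_{\square}$ along $\pi_0$.

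First, by~\cref{lem:generic_section_terminal} applied to $\CScone_{\MS^\bullet}$ (noting that $\CS_G(1_{\CA_\square})$ is the initial model of $\Th_{\mathsf{CTT}}$, since $G$ preserves the terminal object), the category $\CSect^{\op}_{\Th_{\mathsf{CTT}}}[\CScone_{\MS^\bullet}]$ has a terminal object strictly preserved by $\pi_0$. Next, I would equip it with a higher-order cubical renaming algebra structure over $(G \circ \pi_0)$ by setting, in analogy with the two previous proofs,
\begin{alignat*}{1}
  & \Var_0((\Gamma,\sem{-}_\Gamma),A) \triangleq \{ a : \Var(\Gamma,A) \mid \sem{\var(\Gamma,a)}_\Gamma = \var^\bullet(\Gamma,\sem{A}_\Gamma, a) \}, \\
  & \MI_0(\Gamma,\sem{-}_\Gamma) \triangleq \{ i : \square.\MI(\Gamma) \mid \sem{\mathsf{int}(i)}_\Gamma = \mathsf{int}^\bullet(\Gamma,i) \},
\end{alignat*}
with $\var_0$ and $\mathsf{int}_0$ inherited from $\CA_{\square}$, and with $0_0, 1_0$ inherited from $\square$. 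The two assumed equalities $\mathsf{int}^\bullet(0) = 0^\bullet$ and $\mathsf{int}^\bullet(1) = 1^\bullet$ ensure that $0_0$ and $1_0$ actually land in $\MI_0$.

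I then need to show that $\Var_0$ and $\MI_0$ are locally representable and that the first projections $\Var_0 \to \Var$ and $\MI_0 \to \square.\MI$ strictly preserve context extensions. By~\cref{lem:cwf_morphism_compatibility}, the action of $G$ on variables and on interval elements is compatible with $\CS_G$ in the sense of~\cref{def:internal_compatibility_condition}; in each case, \cref{lem:generic_section_locally_representable} then yields both the local representability and the strict preservation of context extensions, the witness $f_Y^\bullet$ in the interval case being just $\mathsf{int}^\bullet$ restricted along displayed morphisms, and in the variable case being $\var^\bullet$ applied to $\sem{A}_\Gamma$.

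Finally, by initiality of $\CA_{\square}$ in the category of first-order cubical renaming algebras over $\CS_{\mathsf{CTT}}$, we obtain a section $H$ of $\pi_0$ in this category. Pulling back the generic section $\sem{-}_0$ along $H$ yields the desired section $\sem{-} \triangleq H^\ast(\sem{-}_0)$ of $\CScone_{\MS^\bullet}$ in $\CPsh(\CA_{\square})$. The two additional computation rules $\sem{\var_A(a)} = \var^\bullet(\sem{A},a)$ and $\sem{\mathsf{int}(i)} = \mathsf{int}^\bullet(i)$ follow directly from the defining conditions of $\Var_0$ and $\MI_0$, since $H$ factors through these subpresheaves. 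The main potential obstacle is ensuring that the initiality of $\CA_{\square}$ is formulated strongly enough to lift simultaneously against two independent locally representable structures; this is unproblematic here because \cref{lem:generic_section_locally_representable} applies independently to each and the liftings do not interact, so that $\CSect^{\op}_{\Th_{\mathsf{CTT}}}[\CScone_{\MS^\bullet}]$ genuinely is a first-order cubical renaming algebra.
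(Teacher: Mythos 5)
Your proposal is correct and follows exactly the route the paper intends: its own proof of Theorem~\ref{thm:relative_induction_principle_cubical_renaming} is literally ``by the same methods as the proofs of Theorems~\ref{thm:relative_induction_principle_renamings} and~\ref{thm:relative_induction_principle_cubical}'', i.e.\ lifting the terminal object via \cref{lem:generic_section_terminal}, defining the subpresheaves $\Var_0$ and $\MI_0$ cut out by the computation rules, invoking \cref{lem:cwf_morphism_compatibility} and \cref{lem:generic_section_locally_representable} for each, and concluding by initiality of $\CA_\square$ among first-order cubical renaming algebras. Your write-up simply makes explicit what the paper leaves implicit, including the (correct) observation that the two locally representable structures are lifted independently and do not interfere.
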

\begin{proof}
  By the same methods as the proofs of~\cref{thm:relative_induction_principle_renamings} and~\cref{thm:relative_induction_principle_cubical}.
\end{proof}


\end{document}